\pdfoutput=1
\documentclass{article}
\RequirePackage{amsthm,amsmath,amsfonts,amssymb}
\RequirePackage[numbers]{natbib}
\RequirePackage{tikz}
\RequirePackage[colorlinks,citecolor=blue,urlcolor=blue]{hyperref}
\RequirePackage{graphicx}
\usepackage[margin=1in]{geometry}
\usepackage{tikz-cd}
\usepackage{doi}

\usepackage{pgfplots}
\pgfplotsset{compat=1.15}
\usepackage{mathrsfs}
\usetikzlibrary{arrows}
\usetikzlibrary{arrows.meta}

\theoremstyle{plain}

\newtheorem{theorem}{Theorem}[section]
\newtheorem{lemma}[theorem]{Lemma}
\newtheorem{proposition}[theorem]{Proposition}

\theoremstyle{definition}
\newtheorem{definition}[theorem]{Definition}

\newtheorem*{remark}{Remark}

\title{Markov chain Monte Carlo Significance tests}
\author{Michael Howes\footnote{Correspondence email \href{mailto:mhowes@stanford.edu}{mhowes@stanford.edu}}\\ Department of Statistics, Stanford University}
\date{June 18, 2023}
\begin{document}
\maketitle

\begin{abstract}
    Monte Carlo significance tests are a general tool that produce p-values by generating samples from the null distribution. However, Monte Carlo tests are limited to null hypothesis which we can exactly sample from.  Markov chain Monte Carlo (MCMC) significance tests are a way to produce statistical valid p-values for null hypothesis we can only approximately sample from. These methods were first introduced by Besag and Clifford in 1989 and make no assumptions on the mixing time of the MCMC procedure. Here we review the two methods of Besag and Clifford and introduce a new method that unifies the existing procedures. We use simple examples to highlight the difference between MCMC significance tests and standard Monte Carlo tests based on exact sampling. We also survey a range of contemporary applications in the literature including goodness-of-fit testing for the Rasch model, tests for detecting gerrymandering \cite{Gerry1} and a permutation based test of conditional independence \cite{CPT}.
\end{abstract}

\section{Introduction}\label{intro}

Suppose we have observed a data point $X_0$ belonging to some space $\mathcal{X}$. A simple statistical hypothesis asks whether $X_0$ was drawn from a fixed distribution $\pi$ on $\mathcal{X}$. One way to test this hypothesis is by a Monte Carlo simulation. We generate a ``sea'' of hypothetical data points $\widetilde{X}_1,\ldots,\widetilde{X}_M$ each sampled independently according to the distribution $\pi$. The observed data $X_0$ is  compared to the samples $\widetilde{X}_1,\ldots,\widetilde{X}_M$ which together approximate $\pi$. In the applications considered here, the space $\mathcal{X}$ will be high dimensional. Thus, comparing $X_0$ to $\widetilde{X}_1,\ldots,\widetilde{X}_M$ requires choosing a test statistic $T$ that maps $X_0$ to some real-valued quantity of interest $T(X_0)$. If $T(X_0)$ is larger than most of $T(\widetilde{X}_1),\ldots,T(\widetilde{X}_M)$, then we have evidence against the null hypothesis $X_0 \sim \pi$. 

For certain distributions $\pi$, it may be difficult or impossible to create independent and identically distributed samples $\widetilde{X}_1,\ldots,\widetilde{X}_M \stackrel{\text{iid}}{\sim} \pi$.  For example, $\pi$ may be the uniform distribution on a large combinatorial set whose elements cannot be enumerated \cite{BC,Gerry1,Gerry2,algebraic}. Or, the null distribution $\pi$ may only be known up to a normalizing constant \cite{GoF,CPT,yash}. In \cite{BC}, Besag and Clifford demonstrated how to use Markov chain Monte Carlo (MCMC) techniques to sample $\widetilde{X}_1,\ldots,\widetilde{X}_M$ and keep the statistical validity of the Monte Carlo test. Their methods place minimal assumptions on the Markov chain and only require that $\pi$ is a stationary distribution of the Markov chain. The chain need not be rapidly mixing or even connected.

Here we review the theory of MCMC significance tests while emphasizing the role of exchangeability. Important properties of these methods are studied through examples, and we end with a survey of applications.

\subsection{Overview}

The paper is structured as follows. In Section~\ref{sec:MC tests} we review Monte Carlo tests and discuss the validity and consistency of Monte Carlo tests. In Section~\ref{sec:E MCMC} we show how to use MCMC to conduct valid significance tests. We describe the well-known MCMC tests introduced in \cite{BC} and a new method. We compare these methods and discuss their relative advantages. Two examples in Section~\ref{sec:examples} give insight into how MCMC tests work, and how they differ from standard Monte Carlo tests. Finally, in Section~\ref{sec:applications} we review applications of MCMC tests to goodness-of-fit testing \cite{GoF,algebraic}, the detection of gerrymandering \cite{Gerry1,Gerry2} and conditional independence testing \cite{CPT}. 

\subsection{Notation and background}

Throughout this paper $\mathcal{X}$ will be a set and $X_0 \in \mathcal{X}$ will represent observed data. The focus is on testing a hypothesis of the form $X_0 \sim \pi$ where $\pi$ is a fixed distribution on $\mathcal{X}$. 

A Markov chain on $\mathcal{X}$ will be described by a \emph{transition kernel} $K$. That is, a sequence $X_0,X_1,\ldots,X_M$ will be a Markov chain with transition kernel $K$ if for all $i=1,\ldots,M$
\[X_i \mid X_0,X_1,\ldots,X_{i-1} \sim K(X_{i-1},\cdot), \]
where $K(x,\cdot)$ is a distribution on $\mathcal{X}$ that depends on $x$. We will call the sequence $X_0,X_1,\ldots,X_M$ a Markov chain generated by $K$ or a Markov chain that transitions according to $K$.

We will always assume that the null distribution $\pi$ is a stationary distribution of $K$. That is, the distribution $\pi$ is preserved by the transition kernel $K$. Formally, if $X_0 \sim \pi$ and $X_1 \mid X_0 \sim K(X_0,\cdot)$, then marginally $X_1 \sim \pi$ also. As discussed above, we do not assume that $\pi$ is the \emph{only} stationary distribution of $K$. 

For simplicity, we will assume that $\pi$ has a density $f:\mathcal{X} \to [0,\infty)$ with respect to some reference measure $\lambda$. We will also assume that $K$ has a transition kernel density $k:\mathcal{X} \times \mathcal{X} \to [0,\infty)$ with respect to $\lambda$. These assumptions mean that
\begin{align*}
    \pi(S) &= \int_S f(x) \lambda(dx),
\end{align*}
and
\begin{align*}
    K(x_0,S) &= \int_S k(x_0,x)\lambda(dx),
\end{align*}
for all $x_0 \in \mathcal{X}$ and (measurable) subsets $S \subseteq \mathcal{X}$. Under these assumptions, the condition that $\pi$ is a stationary distribution for $K$ is equivalent to
\[
    f(y) = \int_\mathcal{X} f(x)K(x,y)\lambda(dx) \text{ for all } y \in \mathcal{X}. 
\]
The $L$-th power of the transition kernel $K$ will be written as $K^L$. Specifically, if $X_0,X_1,\ldots,X_{ML}$ is a Markov chain generated by $K$, then $X_0,X_L,X_{2L}\ldots,X_{ML}$ is a Markov chain generated by $K^L$. In words, $L$ steps of a Markov chain with transition kernel $K$ corresponds to one step of a Markov chain with transition kernel $K^L$. Since $\pi$ is a stationary distribution of $K$, $\pi$ is also a stationary distribution for $K^L$.

\section{Monte Carlo tests}\label{sec:MC tests}

Monte Carlo methods are used in statistical testing when the distribution of a test statistic under the null cannot be computed analytically. A Monte Carlo significance test of the hypothesis $H:X_0 \sim \pi$ proceeds as follows:
\begin{enumerate}
    \item Generate samples $\widetilde{X}_1,\ldots,\widetilde{X}_M \stackrel{\text{iid}}{\sim} \pi$ independently of $X_0$.
    \item Calculate the test statistic $T_0=T(X_0) \in \mathbb{R}$ and statistics $\widetilde{T}_1 = T(\widetilde{X}_1),\ldots, \widetilde{T}_M = T(\widetilde{X}_M)$ for comparison.
    \item Return 
    \begin{align}p_{MC} &= \frac{|\{1 \le i \le M : \widetilde{T}_i \ge T_0\}| + 1}{M+1}\label{eq p-value}\\
    &=\frac{\sum_{i=1}^M I[\widetilde{T}_i \ge T_0] + 1}{M+1},\nonumber \end{align}
    where $|S|$ denotes the size of a set $S$, and $I[S]$ is the indicator random variable for the event $S$.
\end{enumerate}
The quantity $p_{MC}$ is a \emph{Monte Carlo p-value}. If we wish to emphasize the fact that  $\widetilde{X}_1,\ldots,\widetilde{X}_M$ are i.i.d., then we will call $p_{MC}$ the \emph{standard} Monte Carlo p-value. Later we will discuss using MCMC to sample $\widetilde{X}_1,\ldots,\widetilde{X}_M$. This motivates the following definition.
\begin{definition}
    A \emph{sampler} is a randomized algorithm that takes as input an initial point $X_0 \in \mathcal{X}$ and a required number of samples $M$. The output is a random sequence of samples $\widetilde{X}_1,\ldots,\widetilde{X}_M \sim P_{X_0}$.
\end{definition}
Given a sampler, we can sample $\widetilde{X}_1,\ldots,\widetilde{X}_M \sim P_{X_0}$  and then calculate $p_{MC}$ as in \eqref{eq p-value}. The properties of $p_{MC}$ will depend on the sampler. Two properties, validity and consistency, are of particular importance.

\subsection{Validity}\label{sec:validity}

Small values of $p_{MC}$ give statistical evidence against the null hypothesis $H : X_0 \sim \pi$. If we observe $p_{MC}\le \alpha$, then we can reject $H$ at significance level $\alpha$.  For this to be a valid test, we require a bound on the probability of falsely rejecting $H$. That is, when the null hypothesis holds, the probability of observing $p_{MC}\le \alpha$ should be at most $\alpha$. If this holds for all possible values of $\alpha \in [0,1]$, then we will say that the sampler used to produce $\widetilde{X}_1,\ldots,\widetilde{X}_M$ is \emph{valid}.

The validity of a sampler is closely related to the exchangeability of $X_0,\widetilde{X}_1,\ldots,\widetilde{X}_M$. In particular, we are interested in samplers that make $X_0,\widetilde{X}_1,\ldots,\widetilde{X}_M$ exchangeable under the null $X_0 \sim \pi$.
\begin{definition}\label{def:exchangeable}
    A sampler $P_{X_0}$ is an \textit{exchangeable sampler} if $X_0 \sim \pi$ and $\widetilde{X}_1,\ldots,\widetilde{X}_M \mid X_0 \sim P_{X_0}$ implies that $X_0,\widetilde{X}_1,\ldots,\widetilde{X}_M$ are exchangeable. That is, if the null hypothesis holds and $\widetilde{X}_1,\ldots,\widetilde{X}_M$ are sampled according to $P_{X_0}$, then the joint distribution $X_0,\widetilde{X}_1,\ldots,\widetilde{X}_M$ is invariant under all permutations.
\end{definition}
Note that sampling $\widetilde{X}_1,\ldots,\widetilde{X}_M$ i.i.d. from $\pi$ is an exchangeable sampler. The next proposition states that all exchangeable samplers are valid.
\begin{proposition}\label{prop:ex-valid}
    If $X_0 \sim \pi$ and $\widetilde{X}_1,\ldots,\widetilde{X}_M \sim P_{X_0}$ is the output of an exchangeable sampler, then
    \begin{equation}\label{eq:valid} \mathbb{P}(p_{MC} \le \alpha) \le \alpha \text{ for all } \alpha \in [0,1]. \end{equation}
    That is, all exchangeable samplers are valid.
\end{proposition}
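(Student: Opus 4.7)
The plan is to exploit the symmetry that exchangeability gives us by embedding the observed p-value as one coordinate of an exchangeable family of ``p-values,'' one for each index $i \in \{0,1,\ldots,M\}$, and then bound the total number of these which can be small by a deterministic combinatorial argument.

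First I would set $T_i = T(\widetilde{X}_i)$ for $i=1,\ldots,M$ and $T_0 = T(X_0)$, so that by the assumed exchangeability of $X_0,\widetilde{X}_1,\ldots,\widetilde{X}_M$, the real-valued vector $(T_0,T_1,\ldots,T_M)$ is also exchangeable. For each $i \in \{0,1,\ldots,M\}$ define
\[
  R_i \;=\; |\{j \neq i : T_j \ge T_i\}| + 1, \qquad p_i \;=\; \frac{R_i}{M+1},
\]
so that $p_0$ coincides with the Monte Carlo p-value $p_{MC}$ defined in \eqref{eq p-value}. Exchangeability of $(T_0,\ldots,T_M)$ transfers to $(p_0,\ldots,p_M)$, so in particular $\mathbb{P}(p_i \le \alpha)$ does not depend on $i$.

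The combinatorial heart of the argument is the deterministic inequality
\[
  \sum_{i=0}^M I[p_i \le \alpha] \;\le\; \lfloor \alpha(M+1)\rfloor \;\le\; \alpha(M+1),
\]
valid on every realization. To see this, set $k = \lfloor \alpha(M+1)\rfloor$ and let $S = \{i : R_i \le k\}$. If $|S| \ge k+1$, pick $i^\star \in S$ achieving $\min_{i \in S} T_i$; then for every $j \in S \setminus \{i^\star\}$ we have $T_j \ge T_{i^\star}$, so $R_{i^\star} \ge (|S|-1)+1 = |S| \ge k+1$, contradicting $i^\star \in S$. (This is the step where one has to be careful about ties, which is the main subtlety of the proof — the ``$\ge$'' in the definition of $R_i$ together with picking the minimum inside $S$ makes the argument go through cleanly regardless of how many $T_i$ coincide.)

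Taking expectations in the displayed inequality and using exchangeability of the $p_i$ gives
\[
  (M+1)\,\mathbb{P}(p_{MC} \le \alpha) \;=\; \sum_{i=0}^M \mathbb{P}(p_i \le \alpha) \;\le\; \alpha(M+1),
\]
from which \eqref{eq:valid} follows immediately. The tie-handling combinatorial bound is the only non-routine step; everything else is a direct consequence of exchangeability.
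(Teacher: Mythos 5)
Your proof is correct and takes essentially the same approach as the paper: both reduce the claim to the symmetry statement that, among the $M+1$ exchangeable rank-based p-values $p_0,\ldots,p_M$, at most an $\alpha$ proportion can be $\le \alpha$ on any realization. The paper dispatches this final step informally ``by symmetry,'' whereas you supply the deterministic counting argument with explicit tie-handling, so your write-up is simply a more complete version of the same proof.
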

\begin{proof}
    Fix $\alpha \in [0,1]$ and assume that $X_0 \sim \pi$. By definition, $p_{MC}$ is less than $\alpha$ if and only if $T_0$ is in top $\alpha$ proportion of $T_0,\widetilde{T}_1,\ldots,\widetilde{T}_M$.     Since $\widetilde{X}_1,\ldots,\widetilde{X}_M$ is the output of an exchangeable sampler, $X_0,\widetilde{X}_1,\ldots,\widetilde{X}_M$ are exchangeable. This implies that $T_0,\widetilde{T}_1,\ldots,\widetilde{T}_M$ are also exchangeable. By symmetry, we can conclude that the probability that $T_0$ is in the top $\alpha$ proportion of $T_0,\widetilde{T}_1,\ldots,\widetilde{T}_M$ is at most $\alpha$.
\end{proof}

We will now study the consistency of $p_{MC}$.

\subsection{Consistency}\label{sec:consistency}

The standard Monte Carlo p-value is an estimate of what we will call the \emph{analytic p-value}. The analytic p-value is the probability of observing a value of $T(\widetilde{X})$ ``as extreme'' as $T(X_0)$ when $\widetilde{X} \sim \pi$. The analytic p-value is thus the tail probability
\begin{equation*}
    p_A = \pi\left(\{x \in \mathcal{X}: T(x) \ge T(X_0)\}\right).
\end{equation*}
Suppose we treat $X_0$ as fixed and sample $\widetilde{X}_1,\ldots,\widetilde{X}_M$ i.i.d. from $\pi$. The indicator variables $I[\widetilde{T}_i \ge T_0]$ are then i.i.d. Bernoulli random variables with expectation $p_A$. Thus, $p_{MC}$ converges in probability to $p_A$ by the weak law of large numbers. If this occurs when we use a different sampler to produce $\widetilde{X}_1,\ldots,\widetilde{X}_M$, then we will say that the sampler is consistent.
\begin{definition}
    A sampler $P_{X_0}$ is \emph{consistent} if for all $X_0 \in \mathcal{X}$, $\widetilde{X}_1,\ldots,\widetilde{X}_M \sim P_{X_0}$ implies
    \[p_{MC} \stackrel{\mathbb{P}}{\to} p_A \text{ as } M \to \infty.\]
\end{definition}
The definition of consistency is conditional on $X_0$. A sampler is consistent if and only if $p_{MC}$ converges to $p_A$ for every possible value of $X_0$. On the other hand, the definition of validity is marginal over $X_0$. The probability in \eqref{eq:valid} is defined with respect to both the randomness in $X_0 \sim \pi$ and the randomness in $\widetilde{X}_1,\ldots,\widetilde{X}_M\sim P_{X_0}$.

In Section~\ref{sec:E MCMC} we will study samplers that use Markov chains to produce $\widetilde{X}_1,\ldots,\widetilde{X}_M$. All of these samplers will be valid but, without further assumptions, they may be inconsistent.

\subsection{Composite null hypotheses}\label{sec:composite}

A limitation of Monte Carlo tests is that they cannot directly be used to test composite hypothesis \cite{dufour2006monte}. Recall that a composite hypothesis is a hypothesis of the form $X_0 \sim \pi$ for some unspecified $\pi \in \mathcal{P}$. Here $\mathcal{P}$ is a collection of distributions on $\mathcal{X}$. An example of a composite hypothesis is that $X_0 \in \mathbb{R}$ is normally distributed with mean zero and unknown variance. In contrast, a simple hypothesis is a hypothesis of the form $X_0 \sim \pi$ for a fixed distribution $\pi$. 

Monte Carlo tests do not apply directly to composite hypotheses because we cannot sample  $\widetilde{X}_i$ from the collection of distributions $\mathcal{P}$. If we sampled $\widetilde{X}_i \stackrel{\mathrm{iid}}{\sim} \pi$ but $X_0 \sim \pi'$ for distinct $\pi, \pi' \in \mathcal{P}$, then $X_0,\widetilde{X}_1,\ldots,\widetilde{X}_M$ will not be exchangeable and the Monte Carlo test may be invalid. Note that this limitation applies to both standard Monte Carlo tests and the MCMC tests that are the focus of this article. 

There are two general strategies, conditioning and maximizing, that reduce testing a composite hypothesis to testing one or more simple null hypotheses. Monte Carlo tests or other methods can then be used to test the simple null hypotheses.

Conditioning reduces the composite hypothesis to a single simple hypothesis. Conditioning applies when the collection $\mathcal{P}$ has a sufficient statistic. In this case, we can condition on the sufficient statistic $S(X_0)=s_0$ and then sample from $\pi(\cdot \mid S(X_0)=s_0)$, the distribution of $X_0 \mid S(X_0)=s_0$. By the definition of sufficiency, this conditional distribution is the same for all $\pi \in \mathcal{P}$. This approach is called \emph{co-sufficient sampling} \cite{stephens2012goodness} and is used in many of the applications in Section~\ref{sec:applications}.  

Maximizing is an alternative to co-sufficient sampling. Maximizing has the advantage of not requiring a sufficient statistic, but is computationally intensive and may suffer from low power. To describe the maximizing procedure, suppose that $p_\pi$ is a valid p-value for the simple null hypothesis $X_0 \sim \pi$. Then, the maximized p-value
\begin{equation}
    p_{\mathrm{max}} = \sup\{p_\pi : \pi \in \mathcal{P}\}, \label{eq:p-max}
\end{equation}
is a valid p-value for the composite null hypothesis $X_0 \sim \pi$ for some $\pi \in \mathcal{P}$. When $p_\pi$ is computed by sampling $\widetilde{X}_i$, the maximization procedure in \eqref{eq:p-max} may be computationally intensive or impossible. In practice, $p_{\mathrm{max}}$ is approximated by either restricting to a finite subset of $\mathcal{P}$ \cite{diciccio2022confidence} or by running a stochastic optimization algorithm such as simulated annealing \cite{dufour2006monte}.

\section{Exchangeable MCMC samplers}\label{sec:E MCMC}

Suppose now that we are unable to sample directly from $\pi$ and instead use MCMC to construct a sampler. The simplest approach would be \emph{sequential MCMC}. In sequential MCMC, we sample $\widetilde{X}_1,\ldots,\widetilde{X}_M$ with a Markov chain that starts at $X_0$ and transitions according to $K^L$ for some fixed value of $L$. If we assume the null hypothesis $X_0 \sim \pi$, then we will have $\widetilde{X}_i \sim \pi$ for every $i$ since $\pi$ is a stationary distribution of $K$. This implies that $\widetilde{X}_1,\ldots,\widetilde{X}_M$ each have the same marginal distribution as $X_0$, but it does not imply that $X_0,\widetilde{X}_1,\ldots,\widetilde{X}_M$ are exchangeable. Indeed, the joint distribution of $X_0,\widetilde{X}_1,\widetilde{X}_2$ will typically be different to the joint distribution of $X_0,\widetilde{X}_2,\widetilde{X}_1$.  

The lack of exchangeability means that we can not apply Proposition~\ref{prop:ex-valid}. Thus, sequential MCMC may be an invalid sampler. There are natural examples where $\mathbb{P}(p_{MC} \le \alpha)$ is of the order $\sqrt{\alpha}$ when $X_0 \sim \pi$ and sequential MCMC is used. Section~6 of the supplementary material of \cite{Gerry1} gives details about one such example.

The following MCMC methods are exchangeable samplers. They achieve exchangeability by using both the transition kernel $K$ and a second transition kernel $\widehat{K}$ called the reversal of $K$. 
\begin{definition}\label{def-reversal}
    Suppose that $\pi$ is a stationary distribution of $K$. A \textit{reversal of $K$} is any transition kernel $\widehat{K}$ with transition kernel density $\widehat{k}$ such that for all $x,y \in \mathcal{X}$
     \[f(x)k(x,y) = f(y)\widehat{k}(y,x), \]
     where $f$ is the density of $\pi$ and $k$ is the kernel density of $K$. If $\widehat{K}=K$, then $K$ is said to be \emph{reversible}.
\end{definition}
\begin{remark}
If $f(x) > 0$ for all $x \in \mathcal{X}$, then $K$ has a unique reversal with respect to $\pi$ with kernel density given by 
\[\widehat{k}(y,x)= \frac{f(x)}{f(y)}k(x,y). \]
We will thus often refer to $\widehat{K}$ as \emph{the} reversal of $K$. 
\end{remark}
The name reversal is justified by the following proposition (see, for example, Chapter~2.4.2 in \cite{bremaud2020markov}). 
\begin{proposition}
    Suppose that $X_0,X_1,\ldots,X_M$ is a Markov chain with transition kernel $\widehat{K}$ and initial distribution $X_0 \sim \pi$. The time-reversed sequence $X_M,X_{M-1},\ldots,X_0$ has the same distribution as a Markov chain with transition kernel $K$ and initial distribution $X_M \sim \pi$.
\end{proposition}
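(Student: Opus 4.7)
The plan is to show that the joint distribution of $(X_M, X_{M-1}, \ldots, X_0)$ agrees with the joint distribution of a Markov chain generated by $K$ with initial distribution $\pi$ by computing and comparing their densities with respect to the product reference measure $\lambda^{\otimes(M+1)}$.

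First I would write down the joint density of the original sequence $(X_0, X_1, \ldots, X_M)$. Since $X_0 \sim \pi$ and $X_i \mid X_{i-1} \sim \widehat{K}(X_{i-1}, \cdot)$, this density is
\[
f(x_0) \prod_{i=1}^M \widehat{k}(x_{i-1}, x_i).
\]
Re-indexing in reverse gives the joint density of the reversed vector $(X_M, X_{M-1}, \ldots, X_0)$. The claim reduces to showing that this equals the joint density of a $K$-chain started at $\pi$, which in the original coordinates is
\[
f(x_M) \prod_{j=1}^M k(x_j, x_{j-1}).
\]

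The key step is telescoping the reversal identity $f(x)k(x,y) = f(y)\widehat{k}(y,x)$ from Definition~\ref{def-reversal}. I would proceed by induction on $M$. The base case $M=1$ is exactly the reversal identity applied with $(x,y) = (x_1, x_0)$. For the inductive step, isolate the leftmost pair $f(x_0)\widehat{k}(x_0, x_1)$ in the product and rewrite it as $f(x_1)k(x_1, x_0)$. The factor $k(x_1,x_0)$ is now in the desired reversed form, and the remaining expression $f(x_1)\prod_{i=2}^M \widehat{k}(x_{i-1}, x_i)$ has exactly the shape of the $M-1$ case starting from $x_1$, so the inductive hypothesis applies.

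The main obstacle is only bookkeeping with indices; no substantive difficulty arises. One small care point worth noting: the defining relation of $\widehat{K}$ is stated pointwise, so the identity holds for every $(x_0, \ldots, x_M)$, and there is no need to invoke the uniqueness formula $\widehat{k}(y,x) = f(x)k(x,y)/f(y)$ (which would implicitly require $f>0$). Since equality of joint densities with respect to $\lambda^{\otimes(M+1)}$ is equivalent to equality of the induced distributions on $\mathcal{X}^{M+1}$, the two sequences have the same law.
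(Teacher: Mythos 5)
Your proof is correct. The paper does not prove this proposition itself but defers to Br\'emaud (Chapter~2.4.2), and your argument---writing the joint density of the $\widehat{K}$-chain and telescoping the reversal identity $f(x)k(x,y)=f(y)\widehat{k}(y,x)$ by induction on $M$---is exactly the standard proof found there, including the correct observation that only the pointwise identity is needed and not the quotient formula requiring $f>0$.
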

Equipped with Definition~\ref{def-reversal}, we are now ready to describe and study MCMC exchangeable samplers.

\subsection{The parallel method}\label{sec:Parallel}

Besag and Clifford introduced two methods for MCMC significance tests in \cite{BC}. The first was the \emph{parallel method}, which is an exchangeable sampler that proceeds as follows.

\begin{enumerate}
    \item Starting from $X_0$, run a Markov chain according to $\widehat{K}$ for $L$ steps to sample $X^*$.
    \item Independently, for $i =1,\ldots,M$,  starting from $X^*$, run a Markov chain  according to $K$ for $L$ steps to sample $\widetilde{X}_i$.
\end{enumerate}

The parallel method is also called the \emph{hub-and-spoke sampler} \cite{GoF} due to the graphical representation of the method shown in Figure~\ref{fig parallel}.

\begin{figure}[hb]
   \centering
    \begin{tikzpicture}[line cap=round,line join=round,>=triangle 45,x=0.6cm,y=0.6cm]
        \draw (6,1) node {$X^*$};
        \draw [-Latex,line width=1.pt,dash pattern=on 3pt off 3pt] (3.,1.) -- (5.579905262245591,0.9713571769712903);
        \draw [-Latex,line width=1.pt] (6.017172234999721,1.420719757493173) -- (6.,4.);
        \draw [-Latex,line width=1.pt] (6.42,0.97) -- (9.,1.);
        \draw [-Latex,line width=1.pt] (6.,0.578929934571454) -- (6.,-2.);
        \draw [-Latex,line width=1.pt] (5.7243246731124655,0.6817184985811287) -- (3.8721854641797337,-1.1148062088881776);
        \draw [-Latex,line width=1.pt] (6.302502476389131,0.7070968559771968) -- (8.112116093706447,-1.1304848290251255);
        \draw [-Latex,line width=1.pt] (6.314795410698285,1.2796495117165376) -- (8.121320343559642,3.121320343559643);
        \draw [-Latex,line width=1.pt] (5.7068924752102665,1.3023044473897722) -- (3.8776336184515734,3.1202737895076282);
        \draw (3,1) node [anchor = east] {$X_0$};
        \draw (4,-1) node[anchor = north east] {$\widetilde{X}_1$};
        \draw (6,-2) node[anchor=north] {$\widetilde{X}_2$};
        \draw (8,-1) node[anchor=north west] {$\widetilde{X}_3$};
        \draw (9,1) node[anchor=west] {$\ldots$};
        \draw (8,3) node[anchor=south west] {$\widetilde{X}_{M-2}$};
        \draw (6,4) node[anchor=south] {$\widetilde{X}_{M-1}$};
        \draw (4,3) node[anchor=south east] {$\widetilde{X}_M$};
        \end{tikzpicture}
            
    \caption{A graphical representation of the parallel method. The solid arrows represent running the Markov chain forwards for $L$ steps and the dashed arrow represents running the Markov chain backwards for $L$ steps.}
    
    \label{fig parallel}
\end{figure}
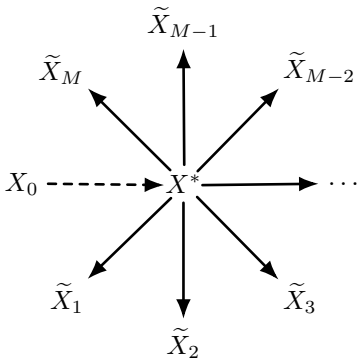

\begin{proposition}\label{prop:parallel}
    The parallel method is an exchangeable sampler.
\end{proposition}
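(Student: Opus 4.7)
The plan is to show that conditional on the intermediate sample $X^*$, the variables $X_0, \widetilde{X}_1, \ldots, \widetilde{X}_M$ are i.i.d., which immediately yields unconditional exchangeability. By construction, given $X^*$, the samples $\widetilde{X}_1, \ldots, \widetilde{X}_M$ are obtained by independent runs of $K^L$ starting at $X^*$, so they are i.i.d. with distribution $K^L(X^*, \cdot)$. Thus the task reduces to verifying that, given $X^*$, the variable $X_0$ also has distribution $K^L(X^*, \cdot)$.

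To do this, I would write down the joint density of $(X_0, X^*)$ directly from the construction. Under the null $X_0 \sim \pi$, step 1 of the parallel method gives the joint density $f(x_0)\,\widehat{k}^L(x_0, x^*)$ with respect to $\lambda \times \lambda$. The key identity I need is the iterated reversal relation
\[
f(x)\, k^L(x, y) = f(y)\, \widehat{k}^L(y, x) \quad \text{for all } x,y \in \mathcal{X},
\]
which I would establish by induction on $L$, using Definition~\ref{def-reversal} together with the Chapman--Kolmogorov factorization $k^L(x,y) = \int k(x,z)\, k^{L-1}(z,y)\, \lambda(dz)$ (alternatively, this identity is equivalent to the statement that $\widehat{K}^L$ is the reversal of $K^L$). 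Applying this identity rewrites the joint density of $(X_0, X^*)$ as $f(x^*)\, k^L(x^*, x_0)$, which exhibits $X^*$ as marginally distributed according to $\pi$ and $X_0 \mid X^* \sim K^L(X^*, \cdot)$.

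Combining this with the construction of the $\widetilde{X}_i$, the full joint density of $(X_0, X^*, \widetilde{X}_1, \ldots, \widetilde{X}_M)$ equals
\[
f(x^*)\, k^L(x^*, x_0) \prod_{i=1}^M k^L(x^*, \widetilde{x}_i),
\]
which is manifestly symmetric in the $M+1$ arguments $x_0, \widetilde{x}_1, \ldots, \widetilde{x}_M$. Marginalizing over $X^*$ preserves this symmetry, so $(X_0, \widetilde{X}_1, \ldots, \widetilde{X}_M)$ is exchangeable, proving the proposition.

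The main obstacle is the iterated reversal identity; the rest of the argument is just bookkeeping of conditional densities. Once that identity is in hand, the proof collapses to the observation that after rewriting, $X^*$ plays the role of a common ``hub'' from which all $M+1$ points are drawn i.i.d. via $K^L$, matching the graphical intuition in Figure~\ref{fig parallel}.
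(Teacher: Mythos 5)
Your proof is correct and follows essentially the same route as the paper: condition on the hub $X^*$, show $X_0\mid X^*\sim K^L(X^*,\cdot)$ via the reversal property so that all $M+1$ points are conditionally i.i.d., and marginalize. The only difference is that you explicitly verify the iterated reversal identity $f(x)k^L(x,y)=f(y)\widehat{k}^L(y,x)$ by induction, a step the paper's proof takes for granted when it asserts $(X^*,X_0)\stackrel{d}{=}(Y_0,Y_1)$ with $Y_0\sim\pi$ and $Y_1\mid Y_0\sim K^L(Y_0,\cdot)$.
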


\begin{proof}
    First note that $X_0,\widetilde{X}_1,\ldots,\widetilde{X}_M$ are conditionally independent given $X^*$. Also, for $1 \le i \le M$
    \[\widetilde{X}_i \mid X^* \sim K^L(X^*,\cdot). \]
    Now assume that $X_0 \sim \pi$. Since $\widehat{K}$ is the reversal of $K$, we know that  $(X^*, X_0) \stackrel{d}{=} (Y_0,Y_1)$ where $Y_0 \sim \pi$ and $Y_1 \mid Y_0 \sim K^L(Y_0,\cdot)$. Thus.
     \[X_0 \mid X^* \sim K^L(X^*,\cdot).\] 
     It follows that $X_0,\widetilde{X}_1,\ldots, \widetilde{X}_M$ are independent and identically distributed given $X^*$. By marginalizing over $X^*$, we get that $X_0,\widetilde{X}_1,\ldots, \widetilde{X}_M$ are exchangeable.
\end{proof}
 By Proposition \ref{prop:parallel}, the parallel method can be used to conduct valid MCMC tests. However, for small values of $L$, each sample $\widetilde{X}_i$ will be strongly dependent on $X_0$. This dependency means that the parallel method may be inconsistent.
\begin{proposition}\label{prop:inconsitent}
    Suppose $\widetilde{X}_1,\ldots,\widetilde{X}_M$ are generated via the parallel method. If the number of steps, $L$, is fixed and the number of samples, $M$, goes to infinity, then there exists a random variable $p_\infty$ such that 
    \[p_{MC} \stackrel{\mathbb{P}}{\to} p_{\infty}, \]
    and
    \begin{equation}
        p_{\infty} \stackrel{d}{=} K^L\left(X^*, \{x : T(x) \ge T(X_0)\}\right),\label{eq:p-infty}
    \end{equation}
    where $X^* \sim \widehat{K}^L(X_0,\cdot)$. 
\end{proposition}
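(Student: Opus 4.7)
The plan is to condition on the pair $(X_0, X^*)$ and apply the weak law of large numbers. By construction of the parallel method, given $X^*$ the samples $\widetilde{X}_1,\ldots,\widetilde{X}_M$ are i.i.d. with distribution $K^L(X^*,\cdot)$, and they are independent of $X_0$ conditional on $X^*$. Hence, conditional on the pair $(X_0,X^*)$, the indicators $I[T(\widetilde{X}_i) \ge T(X_0)]$ are i.i.d.\ Bernoulli random variables with common mean
\[
    q := K^L\bigl(X^*, \{x : T(x) \ge T(X_0)\}\bigr).
\]

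First I would apply the weak law of large numbers conditionally on $(X_0, X^*)$ to the Bernoulli sequence above. This gives
\[
    \frac{1}{M}\sum_{i=1}^M I[\widetilde{T}_i \ge T_0] \stackrel{\mathbb{P}}{\to} q
\]
conditionally on $(X_0,X^*)$. Since the $+1$ terms in the numerator and denominator of $p_{MC}$ contribute $O(1/M)$, standard manipulation yields $p_{MC} \stackrel{\mathbb{P}}{\to} q$ conditionally on $(X_0,X^*)$. Next I would promote this conditional convergence to unconditional convergence: for any $\varepsilon > 0$ the conditional probability $\mathbb{P}(|p_{MC}-q| > \varepsilon \mid X_0, X^*)$ is bounded by $1$ and tends to $0$ almost surely, so by dominated convergence the unconditional probability also tends to $0$. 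Defining $p_\infty := q$ then gives $p_{MC} \stackrel{\mathbb{P}}{\to} p_\infty$.

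Finally, to verify the distributional identity \eqref{eq:p-infty}, I would observe that by Step~1 of the parallel method $X^*$ is generated from $X_0$ by $L$ steps of $\widehat{K}$, so $X^* \mid X_0 \sim \widehat{K}^L(X_0,\cdot)$; substituting into the definition of $q$ produces exactly the right-hand side of \eqref{eq:p-infty}.

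There is no substantial obstacle here: the whole argument is really a conditional WLLN. The one point that deserves care is that $q$ itself is a random variable depending on both $X_0$ and $X^*$, so one must condition on the \emph{pair} (not just $X^*$) to make the indicators i.i.d.\ Bernoulli, and then lift the conditional convergence to an unconditional statement.
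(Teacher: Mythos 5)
Your proof is correct and follows essentially the same route as the paper's: condition on $(X_0,X^*)$, note the samples are i.i.d.\ from $K^L(X^*,\cdot)$, apply the weak law of large numbers, and then average over $X^*\sim\widehat{K}^L(X_0,\cdot)$. You simply make explicit two steps the paper leaves implicit, namely conditioning on the pair rather than on $X^*$ alone and lifting the conditional convergence to an unconditional one via dominated convergence.
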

\begin{proof}
    Recall that in step 1 of the parallel method, we draw $X^*\sim\widehat{K}^L(X_0,\cdot)$. Next, we draw i.i.d. samples $\widetilde{X}_1,\ldots,\widetilde{X}_M \stackrel{\text{iid}}{\sim} K^L(X^*,\cdot)$. If $X^*$ is fixed and $M$ goes to infinity, $p_{MC}$ will converge to the analytic p-value for the distribution $K^L(X^*,\cdot)$. By averaging over the possible values of $X^* \sim \widehat{K}^L(X_0,\cdot)$, we get the expression for $p_\infty$ in \eqref{eq:p-infty}.
\end{proof}
In words, Proposition~\ref{prop:inconsitent} says that $p_{MC}$ converges to a mixture of analytic p-values. The limiting value is thus random since it depends on the sample $X^*$. Some consequences of this are discussed in Section~\ref{sec:examples} where we use the parallel method to test the null hypothesis that $X_0$ is normally distributed with mean zero and variance one.

\subsection{The permuted serial method}\label{serial}

Besag and Clifford \cite{BC} also introduced the serial method. Here we will present a variation called the \emph{permuted serial method} due to \cite{GoF}.  This method requires the same amount of computation as the parallel method but uses longer runs of the Markov chain to sample $\widetilde{X}_1,\ldots,\widetilde{X}_M$.

\begin{enumerate}
    \item Sample a permutation $\sigma$ uniformly from the set of all permutations of $\{0,1,\ldots,M\}$.\label{serial:1}
    \item Set $m^* = \sigma(0)$ and $Y_{m^*} = X_0$.\label{serial:2}
    \item \begin{enumerate}
        \item From $Y_{m^*}$ run a Markov chain according to $\widehat{K}^L$ for $m^*$ steps to sample $Y_{m^*-1}$, $Y_{m^*-2}$, $\ldots$, $Y_0$.
        \item From $Y_{m^*}$ run a Markov chain according to $K^L$ for $M-m^*$ steps to sample $Y_{m^*+1}$, $Y_{m^*+2}$, $\ldots$, $Y_M$.
    \end{enumerate}\label{serial:3}
    \item For $i=1,\ldots,M$, set $\widetilde{X}_i =Y_{\sigma(i)}$. \label{serial:4}
\end{enumerate}

The permuted serial method is represented visually in Figure~\ref{fig serial}.

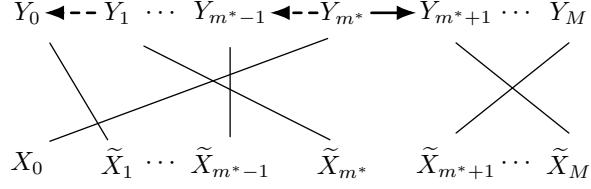
\begin{figure}[t]
    \centering
    \begin{tikzpicture}[line cap=round,line join=round,>=triangle 45,x=1.5cm,y=2.0cm]
        \draw (0,0) node {$Y_{m^*}$};
        \draw (1,0) node {$Y_{m^*+1}$};
        \draw (1.55,0) node {$\cdots$};
        \draw (2,0) node {$Y_{M}$};
        \draw (-1.6,0) node {$\cdots$};
        \draw (-1,0) node {$Y_{m^*-1}$};
        \draw (-2,0) node {$Y_1$};
        \draw (-2.8,0) node {$Y_0$};
        
        \draw (0,-1) node {$\widetilde{X}_{m^*}$};
        \draw (1,-1) node {$\widetilde{X}_{m^*+1}$};
        \draw (1.55,-1) node {$\cdots$};
        \draw (2,-1) node {$\widetilde{X}_{M}$};
        \draw (-1.6,-1) node {$\cdots$};
        \draw (-1,-1) node {$\widetilde{X}_{m^*-1}$};
        \draw (-2,-1) node {$\widetilde{X}_1$};
        \draw (-2.8,-1) node {$X_0$};
        \draw [-Latex,line width=1.pt] (0.25,0) -- (0.65,0);
        \draw [-Latex,line width=1.pt,dash pattern=on 3pt off 3pt] (-0.23,0) -- (-0.65,0);
        \draw [-Latex,line width=1.pt,dash pattern=on 3pt off 3pt] (-2.2,0.) -- (-2.65,0.);
        \draw [line width=0.5pt] (-0.13790404402645942,-0.17091664160820047) -- (-2.6,-0.8544421677115974);
        \draw [line width=0.5pt] (2,-0.2) -- (1,-0.8);
        \draw [line width=0.5pt] (1.,-0.2) -- (2,-0.8);
        \draw [line width=0.5pt] (-2.6,-0.2010721795245268) -- (-2.08796216261557,-0.8443903217394887);
        \draw [line width=0.5pt] (-1.7663030915080877,-0.22117587146874437) -- (-0.11780035208224171,-0.8443903217394887);
        \draw [line width=0.5pt] (-1.0023627976278178,-0.23122771744085316) -- (-1.0023627976278178,-0.8242866297952711);
        \end{tikzpicture}
    \caption{A graphical representation of the permuted serial method. Solid arrows represent running a Markov chain forwards for $L$ steps and the dashed arrows represent running the Markov chain backwards for $L$ steps. The lines represent the random permutation $\sigma$.}
    
    \label{fig serial}
\end{figure}
\begin{remark}
    The original version of the serial method presented in \cite{BC} differs from the above procedure. Instead of sampling a permutation $\sigma$, \cite{BC} sample $m^* \sim \mathrm{Unif}(\{0,\ldots,M\})$ and then continue to steps~\ref{serial:2} and \ref{serial:3}. They skipped step \ref{serial:4} and returned the p-value, 
\[p=\frac{\left|\{0\le i \le M : i \neq m^*, T(Y_i) \ge T(X_0)  \}\right|+1}{M+1}. \]
The above quantity is invariant under permuting the samples $\{Y_i : i \neq m^*\}$. Thus, the original serial method produces the same p-value as the permuted version and the two are equivalent for testing purposes. However, only the permuted version produces an exchangeable sequence \cite{GoF}.
\end{remark}
\begin{proposition}\label{prop:permuted serial}
    The permuted serial method is an exchangeable sampler.
\end{proposition}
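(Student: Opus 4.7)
The plan is to reduce everything to the observation that, conditional on $\sigma$, the auxiliary sequence $Y_0,\ldots,Y_M$ constructed in step~\ref{serial:3} is distributed as a stationary Markov chain with transition kernel $K^L$, and crucially this distribution does \emph{not} depend on $\sigma$. Granted that, $\sigma$ is independent of $(Y_0,\ldots,Y_M)$, and since $\widetilde{X}_i = Y_{\sigma(i)}$ (with the convention $\widetilde{X}_0 := X_0 = Y_{\sigma(0)}$), the output $(X_0,\widetilde{X}_1,\ldots,\widetilde{X}_M)$ is obtained by applying a uniform random permutation to a fixed random vector, which is automatically exchangeable.

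To carry out the conditional distribution calculation, I would condition on $\sigma$ (equivalently on $m^* = \sigma(0)$) and on $X_0 \sim \pi$. By construction $Y_{m^*} = X_0 \sim \pi$, and the forward segment $(Y_{m^*},Y_{m^*+1},\ldots,Y_M)$ is generated by $M-m^*$ steps of the kernel $K^L$ started from $Y_{m^*}$. Since $\pi$ is stationary for $K^L$, this segment has the same joint law as the first $M-m^*+1$ consecutive coordinates of a stationary Markov chain with kernel $K^L$. For the backward segment $(Y_{m^*},Y_{m^*-1},\ldots,Y_0)$, we apply the reversal proposition following Definition~\ref{def-reversal}: because $\widehat{K}^L$ is the reversal of $K^L$ (which follows from Definition~\ref{def-reversal} by an induction on $L$, using stationarity of $\pi$), running $\widehat{K}^L$ from $\pi$ and then time-reversing produces a stationary $K^L$-chain. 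Thus the reversed backward segment $(Y_0,Y_1,\ldots,Y_{m^*})$ also has the law of a stationary $K^L$-chain of length $m^*+1$. Because the two segments are conditionally independent given $Y_{m^*}$, concatenating them at $Y_{m^*}$ (which has the stationary marginal $\pi$) yields that $(Y_0,\ldots,Y_M)\mid\sigma$ is distributed exactly as a stationary Markov chain with kernel $K^L$, and this law has no dependence on $m^*$.

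With that established, unconditioning shows $\sigma$ is independent of $(Y_0,\ldots,Y_M)$. Now fix any permutation $\tau$ of $\{0,1,\ldots,M\}$. Then
\[
(X_0,\widetilde{X}_{\tau(1)},\ldots,\widetilde{X}_{\tau(M)}) = (Y_{\sigma(\tau(0))}, Y_{\sigma(\tau(1))},\ldots,Y_{\sigma(\tau(M))}) = (Y_{(\sigma\circ\tau)(0)},\ldots,Y_{(\sigma\circ\tau)(M)}),
\]
and since $\sigma$ is uniform on permutations and independent of $(Y_0,\ldots,Y_M)$, so is $\sigma\circ\tau$. Hence the joint law of $(X_0,\widetilde{X}_1,\ldots,\widetilde{X}_M)$ is invariant under the permutation indexed by $\tau$, proving exchangeability.

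The main obstacle is the reversal step, i.e., justifying that the concatenation of a forward $K^L$-run and a backward $\widehat{K}^L$-run, glued at a stationary coordinate, reproduces the stationary $K^L$-chain. This is really the whole content of the reversal proposition recalled just before Section~\ref{sec:Parallel}; once it is applied at level $K^L$ rather than $K$, the remaining bookkeeping with $\sigma$ is purely combinatorial.
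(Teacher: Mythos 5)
Your proposal is correct and follows essentially the same route as the paper: the key step in both is showing that the law of $(Y_i)_{i=0}^M$ does not depend on $m^*$ (via the reversal property applied at the level of $K^L$), hence $(Y_i)_{i=0}^M$ is independent of $\sigma$, after which exchangeability follows from applying a uniform permutation to an independent random vector. The only difference is cosmetic: you close the argument by noting $\sigma\circ\tau$ is again uniform, while the paper writes out the mixture over permutations explicitly.
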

To prove that the permuted serial method is an exchangeable sampler, we will use the following lemma.
\begin{lemma}\label{lemma:indep}
    Let $\sigma$ be the random permutation and $(Y_i)_{i=0}^M$ be the random samples from steps \ref{serial:1}-\ref{serial:3} of the permuted serial method. If $X_0 \sim \pi$, then $(Y_i)_{i=0}^M$ are independent of $\sigma$.
\end{lemma}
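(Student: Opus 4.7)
My plan is to show that the conditional distribution of $(Y_i)_{i=0}^M$ given $\sigma$ is the distribution of a Markov chain with transition kernel $K^L$ and initial distribution $\pi$, \emph{independent of the value of $\sigma$}. Since this conditional distribution is constant in $\sigma$, the joint factorizes and $(Y_i)_{i=0}^M$ is independent of $\sigma$.

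First I would observe that the permutation $\sigma$ enters the construction of $(Y_i)_{i=0}^M$ only through the scalar $m^* = \sigma(0)$; the remaining coordinates of $\sigma$ are used in step~\ref{serial:4} to form $(\widetilde{X}_i)$ but play no role in producing the $Y$'s. So it suffices to condition on $m^* = m$ and show that the law of $(Y_i)_{i=0}^M$ does not depend on $m$.

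Fix $m \in \{0,1,\ldots,M\}$ and condition on $m^* = m$. By construction $Y_m = X_0 \sim \pi$, and the block $Y_m, Y_{m-1}, \ldots, Y_0$ is a Markov chain with transition kernel $\widehat{K}^L$ started at $\pi$. Since $\widehat{K}^L$ is the reversal of $K^L$ with respect to $\pi$, the time-reversal proposition stated just before Section~\ref{sec:Parallel} tells us that $Y_0, Y_1, \ldots, Y_m$ has the same distribution as a Markov chain generated by $K^L$ with $Y_0 \sim \pi$; in particular $Y_m \sim \pi$. Independently of the backwards block (given $Y_m$), the block $Y_m, Y_{m+1}, \ldots, Y_M$ is a Markov chain with kernel $K^L$ started at $Y_m$. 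Concatenating these two blocks using the Markov property at $Y_m$ gives that, conditionally on $m^* = m$, the full sequence $(Y_i)_{i=0}^M$ is distributed as a Markov chain with kernel $K^L$ and initial distribution $\pi$.

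This conditional law is the same for every $m$, hence independent of $m^*$ and therefore independent of $\sigma$. The only real content is the application of the reversal identity to splice the backward and forward halves together into a single $K^L$-chain started at stationarity; the rest is bookkeeping. I do not expect any obstacle beyond being careful that $\sigma$ is drawn independently of all the Markov-chain randomness in step~\ref{serial:3}, so that conditioning on $\sigma$ only affects the construction through $m^*$.
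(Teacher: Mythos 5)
Your proposal is correct and follows essentially the same route as the paper's proof: both arguments note that $\sigma$ enters only through $m^*=\sigma(0)$, use the reversal identity to rewrite the backward $\widehat{K}^L$ block as a forward $K^L$ chain started from $Y_0\sim\pi$, and splice it with the forward block to see that $(Y_i)_{i=0}^M$ is a $K^L$ chain from $\pi$ regardless of $m^*$. Your version is slightly more explicit about the conditional independence of the two blocks given $Y_{m^*}$, which the paper leaves implicit, but this is the same argument.
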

\begin{proof}
    Assume that $X_0 \sim \pi$. The random permutation $\sigma$ only affects $(Y_i)_{i=0}^M$ through the random index $m^*=\sigma(0)$ at which we set $Y_{m^*} = X_0$. Thus, to show that $(Y_i)_{i=0}^M$ is independent of $\sigma$, we will show that the distribution of $(Y_i)_{i=0}^M$ does not depend on $m^*$. 
    
    To do this, recall that $Y_{m^*-1},Y_{m^*-2},\ldots,Y_0$ are generated by running a Markov chain from $Y_{m^*}=X_0$ according to $\widehat{K}^L$. By the assumption that $X_0 \sim \pi$ and the definition of $\widehat{K}$, this is equivalent to sampling $Y_0 \sim \pi$ and then running a chain from $Y_0$ according to $K^L$ to sample $Y_1,Y_2,\ldots,Y_{m^*}$. The samples $Y_{m^*+1},Y_{m^*+2},\ldots,Y_M$ are generated by running a Markov chain from $Y_{m^*}=X_0$ according to $K^L$. Thus, the distribution of $(Y_i)_{i=0}^M$ is equivalent to sampling $Y_0 \sim \pi$ and running a Markov chain from $Y_0$ according to $K^L$ to sample $Y_1,\ldots,Y_M$. This description of the distribution of $(Y_i)_{i=0}^M$ does not depend on $m^*$. Thus, $(Y_i)_{i=0}^M$ are independent of $\sigma$. 
\end{proof}
We will now use Lemma~\ref{lemma:indep} to prove that the permuted serial method produces exchangeable samples.
\begin{proof}[Proof of Proposition~\ref{prop:permuted serial}]
   Assume that $X_0 \sim \pi$ and let $\sigma$, $(Y_i)_{i=0}^M$ and $(\widetilde{X}_i)_{i=1}^M$ be as in the permuted serial method. By Lemma~\ref{lemma:indep}, $(Y_i)_{i=0}^M$ and $\sigma$ are independent. Since $X_0 = Y_{\sigma(0)}$ and $\widetilde{X}_i = Y_{\sigma(i)}$, we can write the distribution of $X_0,\widetilde{X}_1,\ldots,\widetilde{X}_M$ in terms of the distribution of $\sigma$ and $(Y_i)_{i=0}^M$. Specifically, if $P_X$ is the distribution of $(X_0,\widetilde{X}_1,\ldots,\widetilde{X}_M)$ and $P_Y$ is the distribution of $(Y_0,\ldots,Y_M)$, then
   \begin{align*}
    &P_X(X_0,\widetilde{X}_1,\ldots,\widetilde{X}_M) \\ 
    &= \sum_{\sigma \in S_{0:M}} \frac{P_Y\left(X_{\sigma^{-1}(0)}, \widetilde{X}_{\sigma^{-1}(1)},\ldots,\widetilde{X}_{\sigma^{-1}(M)}\right)}{(M+1)!} ,
   \end{align*}
   where  $S_{0:M}$ is the set of permutations of $\{0,1,\ldots,M\}$. The right-hand-side of the above equation is invariant under permuting $X_0,\widetilde{X}_1,\ldots,\widetilde{X}_M$. Thus, $X_0,\widetilde{X}_1,\ldots,\widetilde{X}_M$ are exchangeable.
\end{proof}

Under an additional assumption on the Markov chain $K$, the permuted serial method is consistent if we keep $L$ fixed and send $M$ to infinity. Intuitively, this is because the number of steps of the Markov chain between samples $\widetilde{X}_i$ tends to increases with $M$. The dependencies among $\widetilde{X}_1,\ldots,\widetilde{X}_M$ are thus decreasing with $M$. This in turn implies that a weak law of large numbers may be applied to the random variables $I[T(\widetilde{X}_i) \ge T(X_0)]$. More formally,
\begin{proposition}\label{prop:serial consistent}
    Assume that $K^L$ is irreducible meaning that for all $x,y \in \mathcal{X}$, there exists $j$ such that $k^{Lj}(x,y) > 0$. Suppose that $\widetilde{X}_1,\ldots,\widetilde{X}_M$ are sampled using the permuted serial method with $L$ fixed. Then, as $M \to \infty$,
    \[p_{MC} \stackrel{\mathbb{P}}{\to} p_A. \]
    Thus, the permuted serial method is consistent. 
\end{proposition}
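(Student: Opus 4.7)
The plan is to rewrite $p_{MC}$ as a single empirical average along the unpermuted sequence $(Y_j)_{j=0}^M$ and then apply the Markov-chain strong law of large numbers to the two sub-chains produced by the construction. As a first step, since $\widetilde{X}_i = Y_{\sigma(i)}$ and $\{\sigma(1),\ldots,\sigma(M)\} = \{0,\ldots,M\}\setminus\{m^*\}$, while $Y_{m^*} = X_0$ forces $I[T(Y_{m^*}) \ge T_0] = 1$ automatically, one obtains the bookkeeping identity
\[p_{MC} = \frac{1}{M+1}\sum_{j=0}^M I[T(Y_j) \ge T_0].\]
I would then fix the realization $X_0 = x_0$ and condition on $m^* = k$. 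Under this conditioning the sequence $(Y_j)_{j=0}^M$ splits into two conditionally independent pieces that meet at $Y_k = x_0$: a backward chain $Y_k, Y_{k-1}, \ldots, Y_0$ with transition kernel $\widehat{K}^L$ and a forward chain $Y_k, Y_{k+1}, \ldots, Y_M$ with transition kernel $K^L$, each started from the deterministic point $x_0$.

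The analytic input is the ergodic theorem for Markov chains. By hypothesis $K^L$ is irreducible with stationary distribution $\pi$, and the identity $f(x)k^{Lj}(x,y) = f(y)\widehat{k}^{Lj}(y,x)$ transfers strict positivity between $k^{Lj}$ and $\widehat{k}^{Lj}$, so $\widehat{K}^L$ is also irreducible with the same stationary distribution $\pi$. Applied to the bounded function $h(y) = I[T(y) \ge T(x_0)]$, which satisfies $\pi(h) = p_A$, the ergodic theorem gives $N^{-1}\sum_{j=1}^N h(Z_j) \stackrel{\mathbb{P}}{\to} p_A$ as $N \to \infty$ for the forward chain $Z_j = Y_{k+j}$ and, separately, for the backward chain $Z_j = Y_{k-j}$, each started at $x_0$.

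To assemble the conclusion I would write $p_{MC}$, conditional on $m^* = k$, as a convex combination
\[p_{MC} = \frac{1}{M+1} + \frac{k}{M+1}\, B_k + \frac{M-k}{M+1}\, F_{M-k},\]
where $B_k$ and $F_{M-k}$ are the backward and forward empirical averages of $h$ over $k$ and $M-k$ steps respectively. Since $m^* \sim \mathrm{Uniform}\{0,\ldots,M\}$, the event $E_M = \{\min(m^*, M-m^*) \ge \sqrt{M}\}$ has probability $1 - O(M^{-1/2})$. On $E_M$ both $B_{m^*}$ and $F_{M-m^*}$ concentrate at $p_A$ by the previous step, so $p_{MC}$ does too; on the complementary event, which has vanishing probability, $p_{MC}$ lies in $[0,1]$ and contributes nothing asymptotically. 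Putting these pieces together gives $p_{MC} \stackrel{\mathbb{P}}{\to} p_A$.

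The main obstacle I expect is exactly this handling of the random split point $m^*$: the ergodic theorem applies to each sub-chain only as its length tends to infinity, so one must combine the high-probability event that both halves are long with uniform boundedness of $p_{MC}$ to absorb the rare event in which one sub-chain stays short. A secondary subtlety worth checking in the write-up is the precise form of the ergodic theorem invoked for a point-irreducible chain on a general state space, but this is standard given the paper's setup.
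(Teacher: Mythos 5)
Your proposal is correct and follows essentially the same route as the paper's appendix proof: rewrite $p_{MC}$ in terms of the unpermuted sequence $(Y_j)$, split at $m^*$ into a backward $\widehat{K}^L$-chain and a forward $K^L$-chain, apply the Markov-chain ergodic theorem to $h(y)=I[T(y)\ge T(x_0)]$ on each piece, and combine via a convex combination. Your explicit treatment of the random split point through the event $\{\min(m^*,M-m^*)\ge\sqrt{M}\}$ is a slightly more careful rendering of the paper's remark that $m^*$ and $M-m^*$ tend to infinity, but it is the same argument.
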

The proof of Proposition~\ref{prop:serial consistent} is an application of the ergodic theorem for Markov chains (Theorem~\ref{theorem:ergodic}). The details are given in the \hyperref[appn]{Appendix}.

\subsection{Tree methods}\label{sec:tree-MCMC}

Tree methods are a family of exchangeable MCMC samplers that unify the parallel and permuted serial methods. The material in this section is new, although similar ideas appear in Section~7 of \cite{Gerry2}. 

Tree methods can be motived by looking again at the visual representations of the parallel and permuted serial methods in Figures~\ref{fig parallel} and \ref{fig serial}. Both methods take as input the data $X_0$ and then generate a dependent collection of samples $\widetilde{X}_1,\ldots,\widetilde{X}_M$. This collection of samples is created by picking a random location for $X_0$ and then using the Markov chains $K$ and $\widehat{K}$ to  generate $\widetilde{X}_1,\ldots,\widetilde{X}_M$. The exact procedure used to generate $\widetilde{X}_1,\ldots,\widetilde{X}_M$ is determined by the placement and direction of the arrows in Figures~\ref{fig parallel} and \ref{fig serial}.

The arrows connecting samples in Figures~\ref{fig parallel} and \ref{fig serial} can be represented as a \emph{directed tree}. Recall that a directed graph is a pair $(V,E)$ where $V$ is a finite set of vertices and $E$ is set of directed edges $E\subseteq V \times V$. For $v,u \in V$, $(v,u) \in E$ means that there is a directed edge from $v$ to $u$ in the graph $(V,E)$. A directed tree $\mathcal{T}$ is a directed graph such that when we ignore the direction of the edges $\mathcal{T}$ is connected and does not contain any cycles, double edges or self-loops. 

The input to a tree method is a \emph{marked tree}. A marked tree is a directed tree with $M+1$ distinguished vertices. An example is shown in Figure~\ref{fig trees}. 

\begin{definition}
    A \emph{marked tree} is a pair $(\mathcal{T},\ell)$ where $\mathcal{T}=(V,E)$ is a directed tree and $\ell$ is an injective function from $\{0,1,\ldots,M\}$ to $V$. The vertices $\{\ell(0),\ell(1),\ldots,\ell(M)\}$ are called \emph{marked vertices}.
\end{definition}
In a marked tree $(\mathcal{T},\ell)$, the vertices of $\mathcal{T}$ correspond to samples and the edges correspond to steps of the Markov chain. The function $\ell$ allows us to use auxiliary samples in the tree method. The marked vertices  correspond to the samples $X_0, \widetilde{X}_1,\ldots,\widetilde{X}_M$ used in the Monte Carlo test whereas unmarked vertices correspond to auxiliary samples. For example, the variable $X^*$ at the ``hub'' of the parallel method corresponds to an unmarked vertex as it is used to generate $\widetilde{X}_1,\ldots,\widetilde{X}_M$, but we do not compare $X_0$ to $X^*$. Unmarked vertices can also be used to simulate running the Markov chain for $L$ steps. This is done by placing $L-1$ unmarked vertices between two marked vertices.

Given a marked tree $(\mathcal{T},\ell)$, the tree method generates an exchangeable sample by picking a random marked vertex and then ``exploring'' the tree $\mathcal{T}$ by taking steps of the Markov chains $K$ and $\widehat{K}$. This generates a sample $Y_v$ for each vertex $v \in V$. The samples from marked vertices $(Y_{\ell(i)})_{i=0}^M$ are then permuted to create the final sample $X_0,\widetilde{X}_1,\ldots,\widetilde{X}_M$. More formally,
\begin{enumerate}
    \item Sample $\sigma$, a uniform permutation of  $\{0,1,\ldots,M\}$.
    \item Set $U=\{\ell(m^*)\}$ and $Y_{\ell(m^*)}=X_0$.
    \item While $U \neq V$:
        \begin{enumerate}
            \item Find $v \in V \setminus U$ and $u \in U$ with $(u,v) \in E$ or $(v,u) \in E$.
            \item If $(u,v) \in E$, sample $Y_v \sim K(Y_u,\cdot)$.
            \item If $(v,u) \in E$, sample $Y_v \sim \widehat{K}(Y_u,\cdot)$.
            \item Set $U=U\cup\{v\}$.
        \end{enumerate}
    \item For $i=1,\ldots,M$, set $\widetilde{X}_i =Y_{\ell(\sigma(i))}$.
\end{enumerate}

\begin{figure}[t]
    \centering
    \begin{tikzpicture}[line cap=round,line join=round,>=triangle 45,x=1.5cm,y=2cm]

        \draw [->,line width=1.pt] (2.2,2.) -- (2.9,2.);
        \draw [->,line width=1.pt] (3.1,2.1) -- (3.38,2.38);
        \draw [->,line width=1.pt] (3.15,2.) -- (3.8,2.);
        \draw [->,line width=1.pt] (5.1,2.1) -- (5.38,2.38);
        \draw [->,line width=1.pt] (4.2,2) -- (4.9,2);
        \draw [->,line width=1.pt] (5.15,2) -- (5.8,2);
        
        \draw  (2,2.) circle (6pt);
        \draw  (3.5,2.5) circle (6pt);
        \draw  (3.,2.) [fill=black] circle (2.5pt);
        \draw  (4.,2.) circle (6pt) ;
        \draw  (5.5,2.5) circle (6pt);
        \draw  (5.,2) [fill=black] circle (2.5pt);
        \draw  (6.,2) circle (6pt);

        \draw  (2.,2.)  node {0};
        \draw  (3.5,2.5) node {1};
        \draw  (4.,2.)  node {2};
        \draw  (5.5,2.5) node {3};
        \draw  (6.,2) node {4};
        
        \end{tikzpicture}
    \caption{A marked tree with $M=4$. There are $M+1=5$ marked vertices labeled $0,1,\ldots,M$. There are two unmarked vertices represented as black dots. Travelling in the direction of an edge corresponds to one step of the Markov chain $K$ and travelling against the direction of an edge corresponds to a step of the reversal $\widehat{K}$.}
    \label{fig trees}
\end{figure}
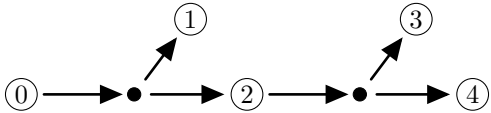
It is important that the graph $\mathcal{T}=(V,E)$ is a tree. Since $\mathcal{T}$ is a tree, there is a unique undirected path from the vertex $\ell(m^*)$ to any other vertex $v$. This means that once we have picked the starting vertex $\ell(m^*)$, the distribution of $(Y_v)_{v \in V}$ is determined. The order in which we add vertices to the set $U$ does not affect the distribution of $(Y_v)_{v \in V}$. The existence of a unique path from $\ell(m^*)$ to $v$ also implies that the above procedure will always terminate. 

We will now show that tree methods produce exchangeable samples which can be used to in a Monte Carlo test.
\begin{proposition}\label{prop:tree}
    Let $(\mathcal{T},\ell)$ be a marked tree, then the tree method for $(\mathcal{T},\ell)$ is an exchangeable MCMC sampler.
\end{proposition}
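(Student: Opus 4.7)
The plan is to mirror the structure of the proof of Proposition~\ref{prop:permuted serial}. The key step will be to establish an analog of Lemma~\ref{lemma:indep}: under $X_0 \sim \pi$, the collection $(Y_v)_{v \in V}$ produced in Step 3 of the tree method is independent of the random permutation $\sigma$. Once this is in hand, the exchangeability of $(X_0,\widetilde{X}_1,\ldots,\widetilde{X}_M)$ follows by exactly the same mixture-over-permutations argument used at the end of the permuted serial proof, since $X_0 = Y_{\ell(\sigma(0))}$ and $\widetilde{X}_i = Y_{\ell(\sigma(i))}$.

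To establish the independence lemma, I would first observe that $\sigma$ enters the construction of $(Y_v)_{v\in V}$ only through the starting vertex $\ell(m^*)$ with $m^*=\sigma(0)$. Hence it suffices to show that the joint distribution of $(Y_v)_{v \in V}$ does not depend on which marked vertex is chosen as the root, provided the value placed there is drawn from $\pi$. I would prove this by computing the joint density explicitly: starting at vertex $v^*$ with $Y_{v^*}\sim\pi$, the tree structure ensures that for every other vertex there is a unique path from $v^*$, so each edge $(u,v)\in E$ contributes either a factor $k(y_u,y_v)$ (if $u$ is reached before $v$) or a factor $\widehat{k}(y_v,y_u)$ (if $v$ is reached before $u$). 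The joint density therefore has the form
\[
p_{v^*}\bigl((y_v)_{v\in V}\bigr) = f(y_{v^*}) \prod_{(u,v)\in E,\ u\text{ first}} k(y_u,y_v) \prod_{(u,v)\in E,\ v\text{ first}} \widehat{k}(y_v,y_u).
\]

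The re-rooting invariance then follows from the reversal identity $f(y_u)k(y_u,y_v)=f(y_v)\widehat{k}(y_v,y_u)$. The cleanest way to proceed is by induction on the tree: for any two adjacent candidate roots $v^*_1,v^*_2$ joined by the edge $(v^*_1,v^*_2)$ (say), switching which endpoint is the root flips the contribution of that one edge between $k$ and $\widehat{k}$ and replaces $f(y_{v^*_1})$ by $f(y_{v^*_2})$, and the reversal identity shows that $p_{v^*_1}=p_{v^*_2}$ on the subtree containing just that edge while leaving all other factors unchanged. Since a tree is connected, any two starting vertices can be connected by a sequence of such adjacency swaps, so $p_{v^*}$ is independent of $v^*$. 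Marginalizing over $m^*$ then gives the desired independence of $(Y_v)_{v\in V}$ and $\sigma$.

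The main obstacle will be formalizing the re-rooting step cleanly, since the bookkeeping of which edges go ``forward'' versus ``backward'' in the propagation depends globally on the choice of root. The inductive approach avoids having to track this global orientation all at once: at each step we only need the local telescoping identity supplied by the reversal property. With the independence lemma secured, the final exchangeability statement follows mechanically from writing the distribution of $(X_0,\widetilde{X}_1,\ldots,\widetilde{X}_M)$ as a uniform mixture over permutations of $(Y_{\ell(0)},\ldots,Y_{\ell(M)})$, exactly as in the closing paragraph of the proof of Proposition~\ref{prop:permuted serial}.
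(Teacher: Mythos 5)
Your proposal is correct and follows essentially the same route as the paper: reduce to showing $(Y_v)_{v\in V}$ is independent of $\sigma$, prove that the law of $(Y_v)_{v\in V}$ is invariant under re-rooting by applying the reversal identity one edge at a time along the path between roots, and then conclude by the same mixture-over-permutations argument as in Proposition~\ref{prop:permuted serial}. The paper states the edge-swapping step more abstractly (the law of $(Y_u,Y_v)$ does not depend on which endpoint is sampled first, then induct), whereas you make the same point explicit via the joint density; this is a difference of presentation, not of approach.
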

\begin{proof}
    Let $\sigma$ and $(Y_v)_{v \in V}$  be the permutation and samples produced by running the tree method for $(\mathcal{T},\ell)$. The samples $X_0,\widetilde{X}_1,\ldots,\widetilde{X}_M$ are a uniform permutation of $(Y_{\ell(i)})_{i=0}^M$. Thus, as in the proof of Proposition~\ref{prop:permuted serial}, it suffices to show that if $X_0 \sim \pi$, then $(Y_v)_{v \in V}$ are independent of the permutation $\sigma$. 

    If $X_0 \sim \pi$, then $Y_v \sim \pi$ for all $v \in V$ by stationarity. Furthermore, by reversibility, we know that for every edge $(u,v) \in E$, the distribution of $(Y_u,Y_v)$ does not depend on whether we first sample $Y_u$ or $Y_v$. By induction, this implies that the process used to generate $(Y_v)_{v \in V}$ is equivalent to deterministically starting at $Y_{\ell(0)} =X_0$ and then exploring $\mathcal{T}$ from the vertex $\ell(0)$. Thus, when $X_0 \sim \pi$, the distribution of $(Y_v)_{v \in V}$ does not depend on $\sigma$ and so $X_0,\widetilde{X}_1,\ldots,\widetilde{X}_M$ are exchangeable. 
\end{proof}

Although tree methods have a conceptual benefit of unifying existing MCMC samplers, they may not always be computationally efficient. This is because the process of finding edges connecting $U$ to $V\setminus U$ may be costly. The naive approach of checking every possible edge has computational cost that is quadratic in the number of vertices. However, certain tree method are tractable. These include the parallel and permuted serial methods, as well as versions of these methods where the step size $L$ is allowed to vary between samples. The split-star method in Section 7 of \cite{Gerry2} is another example of a tree method that can be performed efficiently.

\subsection{Comparison}

In this section we offer some heuristic arguments comparing the parallel and permuted serial methods from \cite{BC}. The main benefit of the permuted serial method over the parallel method is its consistency. Even for finite $M$, we expect the permuted serial method to be closer to a standard Monte Carlo test than the parallel method. In the parallel method, each $\widetilde{X}_i$ is the result of running a Markov chain $2L$ steps from $X_0$. However, in the permuted serial method, the number of steps is $|\sigma(i)-\sigma(0)|L$ where $\sigma$ is a uniformly drawn random permutation of $\{0,1,\ldots,M\}$. Thus, $\widetilde{X}_i$ is an average of $\frac{M+2}{3}L$ steps away from $X_0$. In applications, the number of MCMC samples $M$ will be in the hundreds or higher. Thus, the dependency between $X_0$ and $\widetilde{X}_i$ in the permuted serial method can be substantially less than in the parallel method. This decrease in dependency means that the permuted serial tests should behave more like a standard Monte Carlo test. 

On the other hand, the parallel method has a computational advantage. Both methods require $M+1$ evaluations of the test statistic $T$ and roughly $ML$ steps of the Markov chain. However, as suggested by its name, the parallel method can be parallelized. Once the latent value $X^*$ has been sampled, $\widetilde{X}_1,\ldots,\widetilde{X}_M$ can be computed in parallel. When parallel computation is available, we can run the parallel method with a much larger value of $L$. Increasing $L$ decreases the dependency between samples. Thus, by parallelizing the computation of $\widetilde{X}_1,\ldots,\widetilde{X}_M$ and using a larger value of $L$, we can get an improvement over using the permuted serial method.

The split-star method from \cite[Section~7]{Gerry2} combines the benefits of both the permuted serial and parallel methods. The split-star method is a tree method that essentially performs the permuted serial method $M_1$ times to produce a total of $M_1M_2$ samples $\widetilde{X}_i$. The authors of \cite{Gerry2} recommend setting $M_1$ equal to the number of parallel processors available so that each run of the serial method can be performed concurrently. The method thus benefits from long runs of the Markov chain like the permuted serial method, but it can be parallelized \cite{Gerry2}.

\section{Examples}\label{sec:examples}

In this section, we study two testings problems that highlight the difference between MCMC p-values and standard Monte Carlo p-values. In first example we examine the power of the parallel method when testing the mean of a normally distributed random variable with known variance. As in Section~6 of \cite{BC} we use an autoregressive Markov chain and examine how the choice of Markov chain affects power. In the second example, the null distribution is a bimodal mixture and the Markov chain is slow to mix. This example shows that there can be qualitative differences between MCMC tests and standard Monte Carlo tests.

\subsection{Testing the standard normal distribution}\label{sec:AR}

Suppose our null hypothesis is $X_0 \sim \mathcal{N}(0,1)$ where $\mathcal{N}(\mu,\tau^2)$ represents the normal distribution with mean $\mu$ and variance $\tau^2$. There is no need to use MCMC to test this hypothesis, but it is illustrative to study these methods in this familiar setting. In Section~6 of \cite{BC}, the authors study the power of the parallel test when using an autoregressive Markov chain to generate the MCMC samples $\widetilde{X}_1,\ldots,\widetilde{X}_M$. In this section we will derive a complementary result. As in Section~6 of \cite{BC}, we will use the following Markov chain. 
\begin{definition}\label{def:AR}
    An order one autoregressive process with autocorrelation $\rho \in (-1,1)$, is a Markov chain $(X_i)_{i=0}^M$ that transitions from $X_i$ to $X_{i+1}$ according to,
    \[X_{i+1} = \rho X_i +\sqrt{1-\rho^2}Z_i, \]
    where  $Z_i \stackrel{\mathrm{iid}}{\sim} \mathcal{N}(0,1)$. Equivalently, the Markov chain has a transition kernel $K$ with density
    \[k(x,y) = \varphi\left(y;\rho x, 1-\rho^2\right), \] 
    where $\varphi(\cdot;\mu,\tau^2)$  is the density of the normal distribution with mean $\mu$ and variance $\tau^2$.
\end{definition}
Autoregressive Markov chains are well-studied in time series models. For any $\rho \in (-1,1)$ the Markov chain in Definition~\ref{def:AR} is reversible, and the standard normal distribution is the unique stationary distribution. However, for $\rho$ close to 1, the correlation between samples is high, and the chain is slow to mix. 

Suppose we use an order one autoregressive Markov chain in the parallel method to test $X_0 \sim \mathcal{N}(0,1)$. We will use the test statistic $T(x)=x$, so that large value of $X_0$ correspond to evidence against the null hypothesis. The analytic p-value is simply the tail probability $p_A = 1-\Phi(X_0)$, where $\Phi$ is the cumulative distribution function for $\mathcal{N}(0,1)$. However, if we use the parallel method, then Proposition~\ref{prop:inconsitent} implies the following.
\begin{proposition}
    \label{prop:non-consistent} Suppose that $\widetilde{X}_1,\ldots,\widetilde{X}_M$ are sampled by using the parallel method with an order one autoregressive Markov chain with correlation $\rho$. If the number of steps is fixed at $L$ and the number of samples $M$ goes to infinity, then 
    \[p_{MC} \stackrel{\mathbb{P}}{\to} p_\infty \stackrel{d}{=} 1 - \Phi\left(\sqrt{1-\rho^{2L}}X_0 - \rho^LZ^*\right), \]
    where $Z^* \sim \mathcal{N}(0,1)$ and $Z^*$ is independent of $X_0$.
\end{proposition}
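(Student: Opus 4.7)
The plan is to reduce Proposition~\ref{prop:non-consistent} to a direct application of Proposition~\ref{prop:inconsitent}, together with an explicit calculation of the $L$-step transition kernel of the AR(1) chain. The key simplification is that the AR(1) chain is reversible (it is a linear Gaussian chain symmetric with respect to $\mathcal{N}(0,1)$), so $\widehat{K} = K$ and $\widehat{K}^L = K^L$. Hence in Proposition~\ref{prop:inconsitent} the latent variable $X^*$ can be written as a single $L$-step move of the AR(1) chain started at $X_0$.

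First I would compute $K^L$ explicitly. Iterating the recursion in Definition~\ref{def:AR} gives $X_L = \rho^L X_0 + \sqrt{1 - \rho^{2L}}\, W$ where $W \sim \mathcal{N}(0,1)$ is independent of $X_0$, since the variance of the accumulated noise equals $(1-\rho^2)\sum_{i=0}^{L-1}\rho^{2i} = 1 - \rho^{2L}$. In particular, $K^L$ has transition density $k^L(x, y) = \varphi(y; \rho^L x, 1 - \rho^{2L})$, and so $X^* \stackrel{d}{=} \rho^L X_0 + \sqrt{1-\rho^{2L}}\, W$ with $W \perp X_0$.

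Next I would use the test statistic $T(x) = x$, so that $\{x : T(x) \ge T(X_0)\} = [X_0, \infty)$. By Proposition~\ref{prop:inconsitent},
\[
p_\infty \stackrel{d}{=} K^L\bigl(X^*, [X_0, \infty)\bigr) = 1 - \Phi\!\left(\frac{X_0 - \rho^L X^*}{\sqrt{1 - \rho^{2L}}}\right).
\]
Substituting the expression for $X^*$ and simplifying the numerator,
\[
X_0 - \rho^L X^* = X_0(1 - \rho^{2L}) - \rho^L \sqrt{1-\rho^{2L}}\, W,
\]
and dividing by $\sqrt{1-\rho^{2L}}$ yields the argument $\sqrt{1-\rho^{2L}}\, X_0 - \rho^L W$. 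Setting $Z^* = W$ gives the claimed identity.

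There is no serious obstacle; the only points needing care are checking that the AR(1) chain is indeed reversible with respect to $\mathcal{N}(0,1)$ (so that $\widehat{K}^L = K^L$ and Proposition~\ref{prop:inconsitent} applies cleanly), and tracking the independence of the noise term $W$ from $X_0$ so that the final $Z^*$ is independent of $X_0$ as stated.
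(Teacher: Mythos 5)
Your proposal is correct and follows exactly the route the paper intends: the paper states this result as an immediate consequence of Proposition~\ref{prop:inconsitent} and gives no separate proof, and you supply precisely the missing details (reversibility of the AR(1) chain so that $\widehat{K}^L=K^L$, the identity $K^L(x,\cdot)=\mathcal{N}(\rho^L x,\,1-\rho^{2L})$, and the algebra reducing $(X_0-\rho^L X^*)/\sqrt{1-\rho^{2L}}$ to $\sqrt{1-\rho^{2L}}\,X_0-\rho^L Z^*$). All of these checks are accurate, so nothing further is needed.
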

Note that the limiting distribution depends only on $\rho^L$ and not on $\rho$ or $L$ individually. This is expected since running this autoregressive chain with autocorrelation $\rho$ for $L$ steps is equivalent to running a chain with autocorrelation $\rho^L$ for one step. 

From Proposition~\ref{prop:non-consistent}, we can study the limiting power of the parallel method. Specifically, consider the alternative $X_0 \sim \mathcal{N}(\mu,1)$ with $\mu > 0$. The uniformly most powerful level $\alpha$ test for this alternative rejects the null when $X_0 \ge \Phi^{-1}(1-\alpha)$ or equivalently when $p_A \le \alpha$. The power of this analytic test is $1-\Phi(\Phi^{-1}(1-\alpha) - \mu)$. On the other hand, if we reject the null when $p_\infty \le \alpha$, then our power is 
\begin{equation}\label{eq:power}1-\Phi(\Phi^{-1}(1-\alpha) - \sqrt{1 - \rho^{2L}}\mu). \end{equation}
Thus, as shown in \cite{BC}, the dependencies in a MCMC test reduce the signal strength, $\mu$, by a factor of $\sqrt{1-\rho^{2L}}$. For $\rho^L$ close to $1$, this loss in power can be substantial as shown in Figure~\ref{fig:power}.

\begin{figure}[ht]
    \centering
    \includegraphics*{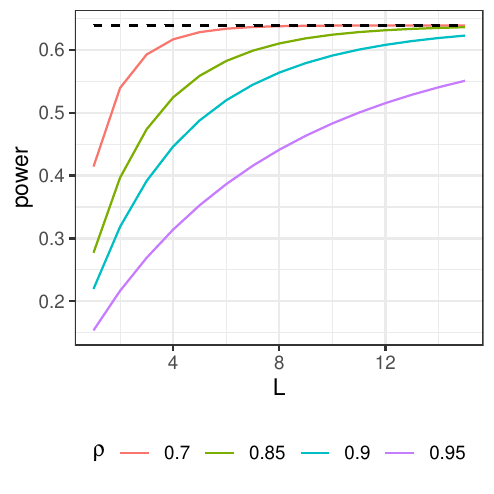}
    \caption{The limiting power of the parallel method from \eqref{eq:power} is shown as a function of $L$ for $\mu=2$, $\alpha = 0.05$ and various values of $\rho$. The dashed line in black is $1-\Phi(\Phi^{-1}(0.95)-2)$, the highest possible power at level $\alpha$. When $\rho=0.7$, the power of the Monte Carlo test quickly approaches the optimal power. When $\rho$ is close to $1$, this convergence is slow.
    }
    \label{fig:power}
\end{figure}

This example also shows the importance of choosing the Markov chain $K$. Here $K$ is determined by the choice of the autocorrelation $\rho$, but in other situations there could be a variety of Markov chains available. The results in this section suggest that Markov chains with faster mixing times will produce more powerful tests. Thus, one can use standard diagnostics from applied Bayesian statistics such as trace plots and effective sample sizes to assess the mixing time. These methods can also be used to pick the parameter $L$. However, for the MCMC samples to remain valid all diagnostics must be independent of the data $X_0$. 

\subsection{Testing a bimodal distribution}

Let $\mathcal{X}$ be the discrete set $\{1,2,\ldots,100\}$ and define $\pi$ by
\[\pi(x) \propto \frac{1}{2}\varphi\left(x;25,6^2\right) + \frac{1}{2}\varphi\left(x;75,6^2\right), \]
where again $\varphi(\cdot;\mu,\tau^2)$ is the density of the normal distribution with mean $\mu$ and variance $\tau^2$. The distribution $\pi$ has two modes, one at $x=25$ and another at $x=75$. As before, let $T(x)=x$ so that large values of $x$ correspond to evidence against the hypothesis $X_0 \sim \pi$. Again the analytic p-value is the tail probability
\[p_A = \sum_{x \ge X_0}\pi(x). \] 
The analytic p-value is a decreasing function of $X_0$ and $p_A \le 0.05$ if and only if $X_0 \ge 83$. The density $\pi(x)$ and the rejection region $\{X_0 \ge 83\}$ are shown in Figure~\ref{fig:bimodal}. 

\begin{figure}[ht]
    \centering
    \includegraphics*{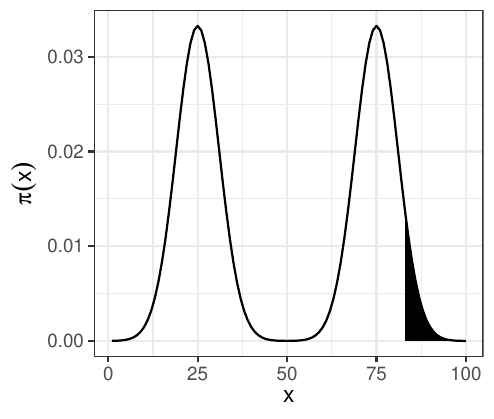}
    \caption{The density $\pi$ has modes at $x=25$ and $x=75$. The shaded region shows the 95th percentile of $\pi$.}
    \label{fig:bimodal}
\end{figure}

We will use Metropolis--Hastings to construct a reversible Markov chain with stationary distribution $\pi$. Our proposal distribution will be $x'=x \pm 1$ with equal probability. Applying Metropolis--Hastings to this proposal produces a reversible transition kernel $K$ with stationary distribution $\pi$. If $a(x,x')=\min\{1, \pi(x')/\pi(x)\}$, then the transition kernel $K$ is given by
\[x \mapsto y = \begin{cases}
    x+1 &\text{w.p. } \frac{a(x,x+1)}{2},\\
    x-1 &\text{w.p. } \frac{a(x,x-1)}{2},\\
    x &\text{w.p. } 1 - \frac{a(x,x+1)+a(x,x-1)}{2}.
\end{cases}\]
Since $\pi$ is bimodal and our proposal distribution takes small steps, $K$ will be very slow to mix. Specifically, a Markov chain initialized near one mode will take a very long time to transition to a point near the other mode. Thus, when using $K$ to perform an MCMC test, we would expect $\widetilde{X}_1,\ldots,\widetilde{X}_M$ to be highly dependent on which mode $X_0$ is closer to. For example, if $X_0 = 40$, then it is very unlikely that any of $\widetilde{X}_1,\ldots,\widetilde{X}_M$ will lie in $\{50,\ldots,100\}$. The MCMC tests are thus likely to reject the null hypothesis when $X_0 = 40$ whereas a standard Monte Carlo test essentially never rejects when $X_0 = 40$. 

This intuition is confirmed by a simulation. The results of which are summarized in Table~\ref{table:bimodal}. In the simulation we drew $X_0 \sim \pi$ 2500 times and computed $p_{MC}$ using  different samplers. The bottom row reports the percentage of simulations when $p_{MC} \le 0.05$. This is the proportion of times we would reject $X_0 \sim \pi$. By the validity of these sampler, each entry in bottom row should be at most $5\%$. The first and second rows show the percentage of simulations when $p_{MC} \le 0.05$ and $X_0 \le 50$ and $X_0 > 50$ respectively. The standard Monte Carlo test does not reject when $X_0 \le 50$ since the rejection region for the analytic p-value is $\{X_0 \ge 83\}$. On the other hand, the MCMC tests reject when $X_0$ is in the right tail of either mode. For both MCMC methods, the percentage of rejections when $X_0 \le 50$ is close to the percentage of rejections when $X_0 > 50$. By marginal validity, these two percentages should be close to $2.5\%$.

\begin{table}[ht]
    \centering 
    \caption{MCMC p-values and standard Monte Carlo p-values were calculated for $2500$ values of $X_0 \sim \pi$. The table reports the percentage of simulations for which $p_{MC} \le 0.05$. For all methods $M=99$ samples were taken. For the MCMC methods, $L=100$ steps were used. Code to reproduce this table and the other examples can be found at \url{https://github.com/Michael-Howes/MCMC-significance-tests}.}
    \label{table:bimodal}
    \begin{tabular}{c|ccc}
         & Standard & Parallel & Permuted Serial \\
        \hline 
      $X_0 \le 50$ & 0\% & 2.4\% & 2.6\% \\
    $X_0 >50$ & 4.4\% & 2.2\% & 2.0\% \\
        \hline 
        & 4.4\% & 4.6\% & 4.6\% 
        \end{tabular}  
\end{table}

In general, if we use an MCMC test with a slow to mix Markov chain, then we should expect the MCMC test to behave very differently to a standard Monte Carlo test. When we compare $X_0$ to $\widetilde{X}_1,\ldots,\widetilde{X}_M$ in a standard Monte Carlo test, we are measuring how ``globally extreme'' $X_0$ is. This is because $T(X_0)$ is compared to $T(\widetilde{X})$ for $\widetilde{X}$ a typical sample from $\pi$. On the other hand, in an MCMC test, we are measuring how ``locally extreme'' $X_0$ is. We are comparing $T(X_0)$ to $T(\widetilde{X})$ where $\widetilde{X}$ depends on $X_0$. When the Markov chain $K$ is slow to mix, these two forms of ``extreme'' may differ substantially as the current example shows.

The fact that MCMC tests reject when $X_0$ is locally extreme can be an advantage. In this example, we see that the MCMC tests have power against alternatives that put high probability on the interval $\{40,\ldots,50\}$ whereas the standard test has no power against these alternatives. Further consequences of the difference between locally and globally extreme values of $X_0$ are explored in \cite{Gerry1,Gerry-reply,Gerry-critique} where MCMC methods are used to detect gerrymandering as discussed later in Section~\ref{politics}.

\section{Applications}\label{sec:applications}

Since their appearance in \cite{BC}, MCMC significance tests have been applied in many domains  \cite{GoF, CPT, Gerry1, Gerry2,  yash,ramdas2023permutation}. Here we will discuss some of these applications.

\subsection{The Rasch model}\label{rasch}

The Rasch model is a model for binary matrices used for item-response data. Suppose $I$ subjects answer a survey with $J$ yes/no questions. The results from the survey can be encoded in an $I \times J$ binary matrix $A$ with $A_{ij} = 1$ if and only if subject $i$ answered yes to question $j$. The Rasch model assumes that the entries $A_{ij}$ are independent Bernoulli random variables with 
\begin{equation}\label{eq-rasch}
    \mathbb{P}_{\beta,\gamma}(A_{ij}=1) = \frac{\exp(\beta_i - \gamma_j)}{1+\exp(\beta_i-\gamma_j)}.
\end{equation}
The model has $I+J-1$ free parameters. When the survey questions are tests, the parameter $\beta_i$ is interpreted as subject $i$'s ability and $\gamma_j$ is interpreted as question $j$'s difficulty. This model is an exponential family where the sufficient statistics are the row and column sums of $A$. 

Suppose that we have a binary matrix $X_0 \in \{0,1\}^{I \times J}$, and we want to know if \eqref{eq-rasch} is an appropriate model for $X_0$. That is, we want to test the composite hypothesis $X_0 \sim \mathbb{P}_{\beta,\gamma}$ for some unspecified $(\beta,\gamma)$. The row and column sums are a sufficient statistic in the Rasch model. Thus, as discussed in Section~\ref{sec:composite}, we can reduce the composite null to a simple null by conditioning. Specifically, let $R_0$ and $C_0$ be the row and column sums of $X_0$ and define $\mathcal{X}$ to be the set of all binary matrices with row sums $R_0$ and column sums $C_0$. If $X_0 \sim \mathbb{P}_{\beta,\gamma}$ for any $(\beta,\gamma)$, then $X_0\mid R_0,C_0 \sim \pi$ where $\pi$ is the uniform distribution on $\mathcal{X}$. Therefore, testing $X_0 \mid R_0,C_0 \sim \pi$ is a conditional test of the composite hypothesis $X_0 \sim \mathbb{P}_{\beta,\gamma}$ for some $(\beta,\gamma)$.

For even a moderate number of items and subjects, sampling directly from the uniform distribution on $\mathcal{X}$ is intractable. A Markov chain with stationary distribution $\pi$ is given in \cite{BC} and a sequential importance sampling approach is described in \cite{SIS}. More recently, a new Markov chain called the rectangle loop algorithm was defined in \cite{Wang}. The rectangle loop algorithm has $\pi$ as its only stationary distribution and was shown to sample more efficiently than the Markov chain in \cite{BC}. Here we use the rectangle loop algorithm with the permuted serial method. 

We used the data set \verb|verbal| from the R package \verb|difR| \cite{difR}. The data set consists of $I=316$ subjects, each of whom answered $J=24$ questions about verbal aggression. Each question involved reading the description of a frustrating situation. The subjects then answered whether they would respond in a certain way such as swearing. The data set also contains each subject's gender. The test statistic used was Andersen's likelihood ratio test \cite{andersen}. This statistic is based on the conditional likelihood,
\[L(\gamma) = \mathbb{P}_{\gamma}\left(X_0\mid R_0\right). \]
The conditional likelihood does not depend on $\beta$ since the row sums $R_0$ are sufficient for $\beta$. Andersen's likelihood ratio equal to the log conditional likelihood from fitting a separate Rasch model for each gender minus the log conditional likelihood from fitting a single Rasch model. The statistics were computed using the function \verb|LRtest| from the R package \verb|eRm| \cite{eRm}. The results are summarized in Figure~\ref{fig Rasch} which shows strong evidence against the single Rasch model. 

\begin{figure}[h]
    \centering
    \includegraphics*[width = 0.4\textwidth]{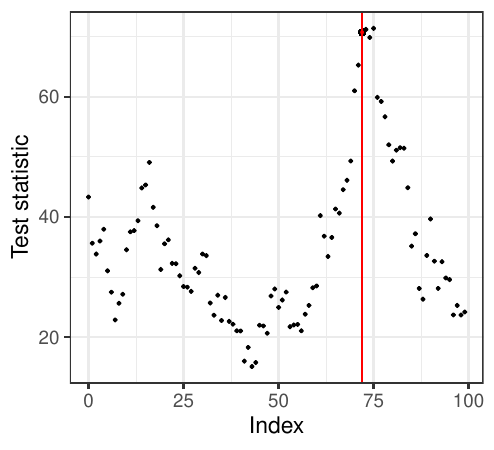}
    \caption{The results from using the serial method to perform a goodness-of-fit test for the Rasch model. The red point equals the value of the test statistic on the observed data $X_0$. The red line is show the random index $m^*$ where we have $Y_{m^*}=X_0$. The black points are equal to the comparison values $T(\widetilde{X}_i)$. To highlight the dependency between the samples $\widetilde{X}_1,\ldots,\widetilde{X}_M$, we have plotted the test statistics in their order prior to being permuted. The test statistic evaluated on the true data, $T(X_0)$ is third largest among $T(X_0), T(\widetilde{X}_1),\ldots, T(\widetilde{X}_M)$. This corresponds to a p-value of $0.03$ giving evidence against the Rasch model.}
    \label{fig Rasch}
\end{figure}

The Rasch model is also used in ecology \cite{SIS}. Binary matrices $A \in \{0,1\}^{I \times J}$ can be used to record the presence and absence of species at different locations. The rows represent different species and the columns represent different locations. A value of $A_{ij}=1$ means that species $i$ is present at location $j$. If $X_0$ is the observed presence/absence matrix for a group of species and locations, then conditioning on the row and column sums of $X_0$ is one way of controlling for species abundance and location habitability. The hypothesis that $X_0$ is drawn from the uniform distribution $\pi$ implies that there are no interactions between species. Statistics such as the number of instances of two species living at the same location can be used to measure the interactions between species. MCMC significance tests can then be used to calibrate these test statistics by producing valid p-values.

\subsection{Political bias in district maps}\label{politics}

In \cite{Gerry1} and \cite{Gerry2}, the authors use MCMC methods to test whether a district map has a political bias. Here a district is defined as collection of census blocks. The number of districts is considered fixed, and a district map is an assignment of each census block to one of the districts. A district map is \textit{valid} if each district is approximately equal in population and the districts are what the authors call ``compact'' which means each district is not too irregular.

The authors of \cite{Gerry1} analyzed the district map of Pennsylvania in 2012. In 2012, Pennsylvania contained 18 districts and roughly 9,000 census blocks. Using our notation, $\mathcal{X}$ is equal to the set of all ways of assigning these 9,000 census blocks to the 18 districts such that the resulting district map is valid. Our observed data $X_0$ is the district map used in 2012 election and the null distribution, $\pi$, is the uniform distribution on $\mathcal{X}$. 

Sampling directly from $\pi$ is impossible. The authors instead construct a Markov chain that uniformly chooses a census block on the border of two districts. The Markov chain then proposes swapping the assignment of the chosen census block. If the swapped block results in a valid district map, then the proposal is accepted. Otherwise, the state of the Markov chain does not change.  By adding a Metropolis--Hastings step, the authors produce a Markov chain on space of valid district maps which has $\pi$ as a stationary distribution. The Markov chain is also shown to be reversible, so that $\widehat{K} = K$. Figure~\ref{fig:MC-map} is from \cite{Gerry1} and shows the 2012 district map of Pennsylvania and a district map produced by taking $2^{40}$ Markov chain steps.

\begin{figure}[t]
    \centering
    \includegraphics[width = 0.4\textwidth]{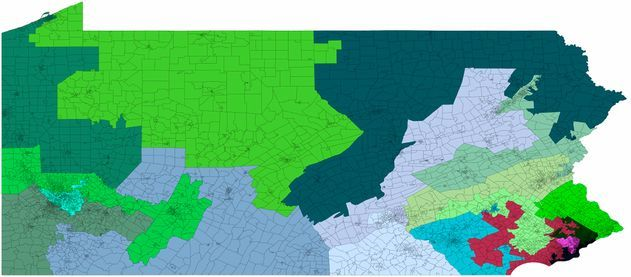}
    \includegraphics[width = 0.4\textwidth]{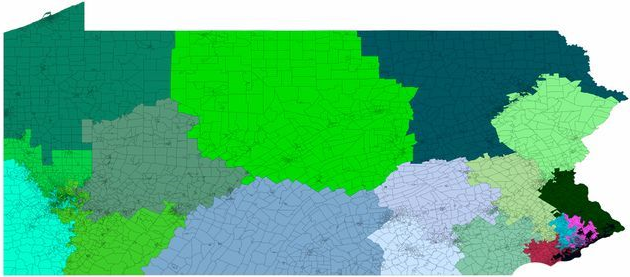}
    \caption{(Left) The 2012 district map of Pennsylvania. (Right) The district map produced by running a Markov chain for $2^{40}$ steps. Figure from \cite{Gerry1}.}
    \label{fig:MC-map}
\end{figure}

In \cite{Gerry2} the authors use a method called the $\sqrt{\varepsilon}$-test to test the hypothesis that the current district map is drawn from the uniform distribution $\pi$. This method is based on the following bound.
\begin{theorem}[Theorem 1.1 in \cite{Gerry1}]\label{prop-sqrt-e}
    Let $K$ be a reversible transition kernel with stationary distribution $\pi$. Suppose that $X_0,\widetilde{X}_1,\ldots,\widetilde{X}_M$ is a Markov chain initialized at $X_0$ with transition kernel $K^L$. Then $\sqrt{2p_{MC}}$ is a valid p-value under the hypothesis $X_0 \sim \pi$.
\end{theorem}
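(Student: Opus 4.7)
The plan is to adapt the second-moment/pairing argument of Chikina, Frieze, and Pegden, exploiting the reversibility of $K^L$ that is inherited from $K$. Since $K$ is reversible with stationary distribution $\pi$, so is $K^L$, so under $X_0 \sim \pi$ the full trajectory is a stationary reversible Markov chain. This yields two-point exchangeability $(\widetilde{X}_i, \widetilde{X}_j) \stackrel{d}{=} (\widetilde{X}_j, \widetilde{X}_i)$ and the palindromic symmetry $(X_0, \widetilde{X}_1, \ldots, \widetilde{X}_M) \stackrel{d}{=} (\widetilde{X}_M, \widetilde{X}_{M-1}, \ldots, X_0)$, though crucially \emph{not} full exchangeability, which is why Proposition~\ref{prop:ex-valid} does not apply.

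First I would reduce the claim: $\sqrt{2 p_{MC}} \le \alpha$ is equivalent to $p_{MC} \le \alpha^2/2$, so it suffices to show $\mathbb{P}(p_{MC} \le \alpha^2/2) \le \alpha$ for all $\alpha \in [0,1]$. Setting $\widetilde{X}_0 := X_0$, $T_i := T(\widetilde{X}_i)$, and $R_i := |\{j \in \{0,\ldots,M\}: T_j \ge T_i\}|$ with ties broken by index, one has $p_{MC} = R_0/(M+1)$, and the target inequality becomes $\mathbb{P}(R_0 \le k) \le \alpha$, where $k := \lfloor \alpha^2(M+1)/2 \rfloor$.

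Next I would record a deterministic structural fact. Let $B := \{i \in \{0,\ldots,M\}: R_i \le k\}$. On every realization $B$ is exactly the set of indices carrying the top $k$ values of $T_0, \ldots, T_M$, so $|B| = k$ surely, even though the identity of $B$ is random. The task is now to bound $\mathbb{P}(0 \in B)$ by $\sqrt{2k/(M+1)}$, which is at most $\alpha$.

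The heart of the argument is a Cauchy--Schwarz / pairing step. The identity $\sum_{i=0}^M I[i \in B] = k$ yields $\sum_i \mathbb{P}(0 \in B,\, i \in B) = k\,\mathbb{P}(0 \in B)$, and the ordered-pair count $\sum_{i \ne j} \mathbb{P}(i \in B, j \in B) = k(k-1)$. Using pairwise exchangeability (to relate $\mathbb{P}(i \in B, j \in B)$ to ``centred'' pairs) and palindromic symmetry (to double-count via the time-reversal), one converts these identities into an inequality of the form $\mathbb{P}(0 \in B)^2 \le 2k/(M+1)$, whence $\mathbb{P}(0 \in B) \le \sqrt{2k/(M+1)} \le \alpha$.

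The main obstacle is the quantitative execution of this last step: the reversibility of the chain provides only a two-fold (forward/reverse) symmetry of the joint law rather than the $(M+1)!$-fold symmetry of full exchangeability, and one must extract from this limited symmetry exactly the constant $\sqrt{2}$. Handling the diagonal $i = j$ and tie-breaking carefully will be delicate. That the constant cannot be improved in general is shown by the $\sqrt{\alpha}$-rate example referenced in Section~6 of the supplementary material of \cite{Gerry1}, so the pairing estimate must be essentially tight.
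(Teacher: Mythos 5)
First, a point of orientation: the survey does not actually prove Theorem~\ref{prop-sqrt-e} --- it imports the statement from \cite{Gerry1} --- so there is no internal proof to measure your argument against, and you are being asked to supply what the paper deliberately omits. Judged on its own, your setup is correct: the reduction to showing $\mathbb{P}(p_{MC}\le \alpha^2/2)\le\alpha$, the observation that the top-$k$ index set $B$ has deterministic cardinality $k$, and the recognition that reversibility yields only a palindromic symmetry of the trajectory rather than full exchangeability are all right. But the step you yourself flag as ``the main obstacle'' is where the entire content of the theorem lives, and the ingredients you propose to close it with cannot do so. The identities $\sum_{i}\mathbb{P}(0\in B,\,i\in B)=k\,\mathbb{P}(0\in B)$ and $\sum_{i\ne j}\mathbb{P}(i\in B,\,j\in B)=k(k-1)$ hold for \emph{any} random index set of deterministic size $k$, with no input whatsoever from the Markov structure, so no algebraic manipulation of them alone can produce $\mathbb{P}(0\in B)\le\sqrt{2k/(M+1)}$ (a set that deterministically contains $0$ satisfies both identities and has $\mathbb{P}(0\in B)=1$). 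Likewise, ``pairwise exchangeability'' $(\widetilde{X}_i,\widetilde{X}_j)\stackrel{d}{=}(\widetilde{X}_j,\widetilde{X}_i)$ is a statement about a two-dimensional marginal, whereas the events $\{i\in B\}$ are functionals of the entire trajectory; transposing two coordinates is not a symmetry of the joint law of the path, so it gives no relation among the quantities $\mathbb{P}(i\in B,\,j\in B)$. The palindromic symmetry gives only $\mathbb{P}(0\in B)=\mathbb{P}(M\in B)$ and hence, from the first moment, the vacuous bound $\mathbb{P}(0\in B)\le k/2$.

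The missing idea --- the engine of the argument in \cite{Gerry1} --- is to condition on the root rather than to count pairs of outlier indices. Writing $h(x)=\mathbb{P}(0\in B\mid X_0=x)$, one has $\mathbb{P}(0\in B)=\mathbb{E}_\pi[h(X_0)]\le\sqrt{\mathbb{E}_\pi[h(X_0)^2]}$, and $\mathbb{E}_\pi[h(X_0)^2]$ is precisely the probability that the root is a top-$k$ outlier in \emph{each} of two conditionally independent arms grown from a common $\pi$-distributed root. Reversibility is what lets you read the backward arm forward and thereby glue the two arms into a single stationary $K^L$-trajectory of length $2M+1$, on which an index that is a $k$-outlier in both of its arms is a $(2k-1)$-outlier overall; a careful counting argument on this doubled trajectory is what produces the constant $2$ under the square root. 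This second-moment step is the only mechanism by which a \emph{square root} of a probability can appear, and it is qualitatively different from the pair-counting you outline. Without it I do not see how your sketch can be completed; as you correctly note, the example in the supplement of \cite{Gerry1} shows the bound is essentially tight, so any proof must exploit reversibility in exactly this quantitative way rather than through generic symmetry identities.
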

Theorem~\ref{prop-sqrt-e} offers an alternative to exchangeable samplers. Instead of generating exchangeable samples, one can directly use the sequential samples from running a single Markov chain. The $\sqrt{\varepsilon}$-test is thus simpler and easier to  interpret than the previous tests based on exchangeability. However, since $p_{MC} \in [0,1]$, we have $\sqrt{2p_{MC}} > p_{MC}$. Thus, the $\sqrt{\varepsilon}$-test will typically have less power than the methods in Section~\ref{sec:E MCMC}.

Using either the $\sqrt{\varepsilon}$-test or an exchangeable sampler requires choosing a test statistic $T$. The test statistic would ideally give tests that have power against ``political motivated'' alternative distributions. That is, distributions that put high probability on district maps that give a partisan advantage. The authors thus construct test statistics based on voting data from the 2012 Pennsylvania election. For each census block, the authors have the number of votes for the Democratic and Republican parties from the 2012 election. Therefore, given a district map, they can calculate what the election result would have been with that district map. Formally, for each district $j$, there is a function $D_j$ from  $\mathcal{X}$ to the interval $[0,1]$ where for a given district map $x \in \mathcal{X}$, $D_j(x)$ is the proportion of votes for the Democratic Party in district $j$. From here, the authors consider a number of statistics such as the variance of the vector $D(x)$ and the difference between the mean and the median of $D(x)$. The authors find highly significant results for all the statistics they consider. One of the authors testified in a state supreme court case that found the 2012 Pennsylvania district map to be unconstitutional \cite{Gerry1}.

\subsection{Goodness-of-fit testing}\label{sec:gof}

In section~\ref{rasch} we tested the composite null that our observed table was distributed according to a Rasch model with unknown parameters. This was done by conditioning on a sufficient statistic to turn the composite hypothesis into a simple hypothesis. A similar procedure can be used when the null hypothesis is a discrete exponential family. Suppose $\mathcal{X}$ is a finite set, and we have a family of distributions $\mathbb{P}_\eta$ given by,
\begin{equation} \label{eq:exp-fam}\mathbb{P}_\eta(x) = \exp\left(\sum_{j=1}^d S_j(x)\eta_j - Z(\eta)\right), \end{equation}
where $S_j : \mathcal{X} \to \mathbb{R}$ are functions, $\eta \in \Omega \subseteq \mathbb{R}^d$ are the natural parameters and $Z(\eta)$  is a log normalizing constant. The function $S_j$ are sufficient statistics for $\eta$ and if $X \sim \mathbb{P}_\eta$, then $X \mid (S_j(X))_{j=1}^d =s$ is uniform on the set 
\[
    \mathcal{X}_s = \{x \in \mathcal{X} : S_j(x)=s_j \text{ for } 1 \le j \le d\}.
\]
Testing $X_0 \sim \mathbb{P}_\eta$ for some $\eta$ has thus been reduced to testing $X_0 \mid (S_j(X_0))_{j=1}^d=s\sim \pi$ where $\pi$ is the uniform distribution on $\mathcal{X}_s$. In \cite{algebraic}, a general construction is given for finding connected Markov chains on $\mathcal{X}_s$ which converge to a given distribution. These chains can be used in an exchangeable sampler to perform goodness-of-fit tests for any parametric family of the form \eqref{eq:exp-fam}.

In a different direction, \cite{GoF} propose a goodness-of-fit test based on \emph{approximate sufficiency}. There, the authors also wish to test a parametric model $X_0 \sim \mathbb{P}_\eta$ for some $\eta \in \Omega \subseteq \mathbb{R}^d$. They use an approach related to the parametric bootstrap \cite{EfroTibs93}. Using the parametric bootstrap one would sample $\widetilde{X}_1,\ldots,\widetilde{X}_M \stackrel{\text{iid}}{\sim} \mathbb{P}_{\hat{\eta}}$ where $\hat{\eta} = \hat{\eta}(X_0)$ is the maximum likelihood estimate (MLE) given the data $X_0$. The problem with this is that $X_0,\widetilde{X}_1,\ldots,\widetilde{X}_M$ are not exchangeable under the null. To create a valid method, the authors sample $\widetilde{X}_1,\ldots,\widetilde{X}_M$ from the conditional distribution
\begin{equation}\label{eq-cond-mle}\mathbb{P}_{\hat{\eta}'(X_0)}(X \mid \hat{\eta}'(X)=\hat{\eta}'(X_0)), \end{equation}
where $\hat{\eta}'(X)$ is a maximizer of 
\[l(\eta) = \log \mathbb{P}_\eta(X) + \eta^\top U, \quad U \sim \mathcal{N}(0, \tau^2I_d).\]
That is, $\hat{\eta}'(X)$ is a noisy version of the MLE $\hat{\eta}(X)$. This noisy MLE, $\hat{\eta}'(X)$ is used instead of the true MLE $\hat{\eta}(X)$ because in some models, the conditional distribution $X \mid \hat{\eta}(X)$ is degenerate. By adding noise to $\hat{\eta}(X)$ we are essentially conditioning on less information about $X$. The conditional distribution $X \mid \hat{\eta}'(X)$ is thus more variable than $X \mid \hat{\eta}(X)$. 

In \cite{GoF}, the authors compute the conditional distribution \eqref{eq-cond-mle} up to a normalizing constant and derive a Markov chain with \eqref{eq-cond-mle} as a stationary distribution. They also show that if $X_0 \sim \mathbb{P}_{\eta_0}$ for some $\eta_0 \in \Omega$, then 
\[\mathbb{P}_{\hat{\eta}'(X_0)}(X | \hat{\eta}'(X)=\hat{\eta}'(X_0)) \approx \mathbb{P}_{\eta_0}(X | \hat{\eta}'(X)=\hat{\eta}'(X_0)). \]
This implies that if $\widetilde{X}_1,\ldots,\widetilde{X}_M$ are sampled from an exchangeable sampler with stationary distribution \eqref{eq-cond-mle}, then $X_0,\widetilde{X}_1,\ldots,\widetilde{X}_M$ are approximately exchangeable under the null. The authors make this approximation precise and show that their method produces asymptotically valid p-values for goodness-of-fit testing in a variety of parametric models.

\subsection{The conditional permutation test}

In \cite{CPT}, the authors develop a test of conditional independence for ``semi-supervised'' statistical problems. These are statistical problem where we have two data sets, a small labeled data set $(X_j,Y_j,Z_j)_{j=1}^{n}$, and a large unlabeled data set $(X_j', Z_j')_{j=1}^{N}$. In \cite{CPT}, it is assumed that there is enough unlabeled data so that the conditional distribution of $X\mid Z = z$ can be estimated precisely and treated as known. The goal is to test the hypothesis that $Y$ is conditionally independent of $X$ given $Z$. That is, the null hypothesis is
\[(X_j,Y_j,Z_j) \stackrel{\mathrm{iid}}{\sim} P, \]
where $P$ is any distribution that factors as
\begin{equation}\label{eq-cond-ind}P(x,y,z) = Q_{X|Z}(x\mid z)Q_{Y|Z}(y \mid z)Q_Z(z). \end{equation}
The factorization \eqref{eq-cond-ind} is equivalent to the statement that $X$ and $Y$ are conditionally independent given $Z$. As stated above, it is assumed that $Q_{X|Z}$ is known, but no assumptions are made on $Q_{Y|Z}$ or $Q_Z$. These are the same assumptions made of the \emph{conditional randomization test (CRT)}  \cite{CRT}. The CRT is a standard Monte Carlo test that conditions on $(Y_j,Z_j)_{j=1}^n$ and resamples $X_j$. More precisely, let $D_0 = (X_j,Y_j,Z_j)_{j=1}^n$ be our observed data set. For $i=1,\ldots,M$ and $j=1,\ldots,n$, independently sample
\begin{equation}\label{eq-resample} \widetilde{X}_j^{(i)} \sim Q_{X|Z}(\cdot \mid Z_j), \end{equation}
 and define $\widetilde{D}_i = (\widetilde{X}_j^{(i)}, Z_j,Y_j)_{j=1}^n$. Under the conditional independence assumption \eqref{eq-cond-ind}, $D_0,\widetilde{D}_1,\ldots,\widetilde{D}_M$ are an i.i.d. sample from the conditional distribution of $D_0$ given $(Z_j,Y_j)_{j=1}^n$.

 The conditional permutation test (CPT) \cite{CPT} is a variant of the above procedure. In addition to conditioning on $(Z_j,Y_j)_{j=1}^n$, the CPT also conditions on the multi-set $\{X_j\}_{j=1}^n$. This means that when sampling in \eqref{eq-resample}, $\widetilde{X}_1^{(i)},\ldots,\widetilde{X}_n^{(i)}$ must be a permutation of the observed data $X_1,\ldots,X_n$. Thus, instead of sampling $\widetilde{X}_1^{(i)},\ldots,\widetilde{X}_n^{(i)}$ we can sample a permutation $\sigma^{(i)}$ and set 
 \[\widetilde{X}_j^{(i)} = X_{\sigma^{(i)}(j)}, \quad \widetilde{D}_i = \left(\widetilde{X}_j^{(i)},Y_j,Z_j\right)_{j=1}^n. \]
 For $\widetilde{D}_i$ to be drawn from the same conditional distribution as $D_0 \mid (Z_j,Y_j)_{j=1}^n,\{X_j\}_{j=1}^n$ we need that the permutation $\sigma^{(i)}$ is drawn from the distribution
 \begin{equation}\label{eq-Psigma}P(\sigma) = \frac{\prod_{j=1}^n Q(X_{\sigma(j)}\mid Z_j)}{\sum_{\sigma'}\prod_{j=1}^n Q(X_{\sigma'(j)}\mid Z_j)}, \end{equation}
 where the sum in the denominator is over all permutations of $\{1,\ldots,n\}$. The computational cost in evaluating such a sum grows quickly with $n$. To avoid this computational cost, the authors construct a Markov chain on permutations of length $n$ with stationary distribution \eqref{eq-Psigma}. They then use the parallel method to generate samples $\widetilde{D}_1,\ldots,\widetilde{D}_M$ that are exchangeable under the null hypothesis that $X$ and $Y$ are conditionally independent given $Z$. The authors find that the CPT has power comparable to that of the CRT and is more robust against misspecification of the conditional distribution $Q_{X|Z}(\cdot\mid z)$

\section{Conclusion}\label{sec:conclusions}

Markov chain Monte Carlo tests and p-values are useful in a variety of problems when direct sampling from the null distribution is impossible. However, unlike standard Monte Carlo p-values, MCMC p-values are not simply approximating the familiar analytic p-value. The use of a Markov chain introduces dependencies between the Monte Carlo samples. These dependencies mean that the outcome of a MCMC test may be very different to the outcome of a standard test.

\section*{Acknowledgements}

The author would like to thank Persi Diaconis for all his support and John Cherian and Timothy Sudijono for detailed comments on an earlier draft. Thank you to the authors of \cite{Gerry1} for the use of their figure. The author was partially supported by the Brad Efron Fellowship.

\section*{Supplementary material}
Code to reproduce the examples and simulations from this paper is available at \url{https://github.com/Michael-Howes/ MCMC-significance-tests/}

\begin{appendix}  
    \section{Consistency of the permuted serial method}\label{appn} 
        Here we prove Proposition~\ref{prop:serial consistent}. First recall that an irreducible Markov chain is \emph{positive recurrent} if the expected return time to any state $x \in \mathcal{X}$ is finite. To prove Proposition~\ref{prop:serial consistent}, we will use the following two theorems.
        \begin{theorem}[Theorem~3.2.6 in \cite{bremaud2020markov}]\label{theorem:positive}
            An irreducible Markov chain is positive recurrent if and only if there exists a stationary distribution. Moreover, the stationary distribution $\pi$ is, when it exists, unique and $\pi(x) > 0$ for all $x$.
        \end{theorem}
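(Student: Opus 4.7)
The plan is to prove the biconditional by constructing an explicit invariant measure in one direction and using the stationarity identity together with irreducibility to rule out the other recurrence classes in the converse direction. For the forward implication (positive recurrence implies existence of a stationary distribution), I would fix a reference state $z$ and consider the occupation measure
\[
\mu_z(x) = \mathbb{E}_z\!\left[\sum_{n=0}^{T_z - 1} I[X_n = x]\right],
\]
where $T_z = \inf\{n \geq 1 : X_n = z\}$ is the first return time to $z$. A first-step/strong Markov argument shows $\mu_z K = \mu_z$, and positive recurrence gives $\sum_x \mu_z(x) = \mathbb{E}_z[T_z] < \infty$. Normalizing then yields a stationary probability distribution $\pi(x) = \mu_z(x)/\mathbb{E}_z[T_z]$.

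For the converse, suppose a stationary distribution $\pi$ exists. First I would rule out transience: if the chain were transient, then $K^n(x,y) \to 0$ for every pair $x,y$, so the stationarity identity $\pi(y) = \sum_x \pi(x) K^n(x,y)$ combined with dominated convergence would force $\pi \equiv 0$, contradicting that $\pi$ is a probability distribution. Having established recurrence, I would invoke the standard uniqueness (up to scalar) of invariant measures for irreducible recurrent chains to conclude that any stationary $\pi$ satisfies $\pi(z) = c / \mathbb{E}_z[T_z]$ at every state $z$. If the chain were null recurrent then $\mathbb{E}_z[T_z] = \infty$ for all $z$, again forcing $\pi \equiv 0$. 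So the chain must be positive recurrent.

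The uniqueness clause falls out of the same identity: the relation $\pi(x) = 1/\mathbb{E}_x[T_x]$ pins down the value of any stationary probability distribution at each state, so two stationary probability distributions must agree. For strict positivity, I would argue by contradiction. Suppose $\pi(x) = 0$ for some $x$. By irreducibility, for every $y$ there is some $n = n(y)$ with $K^n(y,x) > 0$. Expanding the stationarity equation,
\[
0 = \pi(x) = \sum_{y'} \pi(y') K^n(y',x) \geq \pi(y) K^n(y,x),
\]
forces $\pi(y) = 0$, and since $y$ was arbitrary, $\pi \equiv 0$, contradicting $\sum_y \pi(y) = 1$.

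The main obstacle will be the careful verification that $\mu_z$ is stationary and that its total mass equals $\mathbb{E}_z[T_z]$. Both rely on decomposing paths at successive visits to $z$ and invoking the strong Markov property, and one has to be careful with index bookkeeping to avoid double-counting the endpoint of an excursion. Once these steps are in place, the rest is standard manipulation of the stationarity identity and the definition of irreducibility.
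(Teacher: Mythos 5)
This statement is imported by the paper as a quoted textbook result (Theorem~3.2.6 of Br\'emaud) and is given no proof in the paper itself, so there is nothing internal to compare against; your sketch should be judged on its own, and it is the standard, correct argument. The occupation-measure construction $\mu_z$ for the forward direction, the elimination of transience via $K^n(x,y)\to 0$ plus dominated convergence, the elimination of null recurrence via the infinite total mass of the invariant measure, and the positivity argument from irreducibility are all sound and are essentially how Br\'emaud proves it. The one place where your outline leans on something substantial without proving it is the ``uniqueness up to scalar of invariant measures for irreducible recurrent chains'': this is the real workhorse of the converse and of the uniqueness clause (it is what lets you write $\pi(z)=c/\mathbb{E}_z[T_z]$), and its proof is comparable in difficulty to the verification that $\mu_z K=\mu_z$ that you flag as the main obstacle. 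If you intend a self-contained proof you would need to supply that lemma (e.g.\ via the minimality of $\mu_z$ among invariant measures with $\mu(z)=1$, established by another first-passage decomposition); if you are content to cite it as standard, the sketch is complete and correct.
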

        \begin{theorem}[Theorem~3.3.2 in  \cite{bremaud2020markov}]\label{theorem:ergodic}
            Let $(Y_i)_{i=0}^M$ be an irreducible positive recurrent Markov chain with stationary distribution $\pi$. Let $g:\mathcal{X} \to \mathbb{R}$ be any measurable function with
            \[\mathbb{E}_{\pi}[|g(Y)|] < \infty. \]
            Then for any initial distribution $Y_0 \sim \pi'$ as $M \to \infty$
            \[\frac{1}{M}\sum_{i=1}^M g(Y_i) \stackrel{\mathbb{P}}{\to} \mathbb{E}_\pi[g(Y)]. \]
        \end{theorem}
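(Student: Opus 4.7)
The plan is to prove the ergodic theorem by decomposing the trajectory of the chain into i.i.d.\ excursions between successive visits to a fixed reference state, and then invoking the classical strong law of large numbers for i.i.d.\ random variables. First, I would reduce to nonnegative $g$ by writing $g = g^+ - g^-$, each satisfying $\mathbb{E}_\pi[|g^{\pm}(Y)|] < \infty$. By Theorem~\ref{theorem:positive}, the stationary distribution $\pi$ is strictly positive on $\mathcal{X}$, so I can fix a reference state $x_0$ with $\pi(x_0)>0$.

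Next, I would define the successive return times to $x_0$ by $T_0 = \inf\{n\ge 1 : Y_n = x_0\}$ and $T_k = \inf\{n > T_{k-1} : Y_n = x_0\}$ for $k\ge 1$. Irreducibility and positive recurrence make each $T_k$ almost surely finite, and the strong Markov property makes the inter-visit times $\tau_k = T_k - T_{k-1}$ for $k\ge 1$ i.i.d.\ with $\mathbb{E}[\tau_1] = 1/\pi(x_0) < \infty$. Likewise, the excursion sums $U_k = \sum_{i=T_{k-1}+1}^{T_k} g(Y_i)$ are i.i.d., and the standard identification of $\pi$ with the normalized expected occupation measure of a single excursion gives $\mathbb{E}[U_1] = \mathbb{E}_\pi[g(Y)]/\pi(x_0)$, with finiteness inherited from the integrability hypothesis.

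Let $N_M = \max\{k\ge 0 : T_k \le M\}$. The classical strong law applied to $(\tau_k)$ and $(U_k)$ gives $T_k/k \to 1/\pi(x_0)$ and $\frac{1}{k}\sum_{j=1}^k U_j \to \mathbb{E}_\pi[g(Y)]/\pi(x_0)$ almost surely. Since $N_M \to \infty$ a.s., taking the ratio yields
\[
\frac{1}{T_{N_M}} \sum_{i=T_0+1}^{T_{N_M}} g(Y_i) \xrightarrow{\mathrm{a.s.}} \mathbb{E}_\pi[g(Y)].
\]
To pass from $T_{N_M}$ to $M$, I would observe that the initial segment $\sum_{i=1}^{T_0} g(Y_i)$ has finitely many terms (since $T_0 < \infty$ a.s.\ under any starting distribution $\pi'$, again by positive recurrence), so its contribution divided by $M$ is $o_\mathbb{P}(1)$. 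The final partial excursion $\sum_{i=T_{N_M}+1}^M g(Y_i)$ is dominated by the single i.i.d.\ term $U_{N_M+1}$, which is of lower order than $N_M \sim M\pi(x_0)$.

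The hard part will be the last step when $g$ is only assumed integrable, not bounded. For bounded $g$ the remainder is trivially $O(\tau_{N_M+1}) = o_\mathbb{P}(M)$. For general integrable $g$ I would proceed by truncation: set $g_K = g \cdot \mathbf{1}\{|g|\le K\}$, apply the result to the bounded function $g_K$, and then let $K\to\infty$, using $\mathbb{E}_\pi[|g-g_K|]\to 0$ together with a uniform-in-$M$ control of $\frac{1}{M}\sum_{i=1}^M |g(Y_i) - g_K(Y_i)|$ in probability. This last control is itself an instance of the ergodic theorem applied to the bounded function $|g - g_K| \wedge K'$ followed by Markov's inequality, so the argument closes in a diagonal fashion. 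A small separate bookkeeping step handles the fact that when $Y_0$ is drawn from $\pi' \ne \delta_{x_0}$, the initial excursion $[0,T_0]$ is not distributed like the later ones; but since $T_0$ is a.s.\ finite, this excursion contributes a quantity that does not grow with $M$ and is therefore asymptotically negligible.
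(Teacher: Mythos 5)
The paper does not prove this statement at all: it is imported verbatim as Theorem~3.3.2 of Br\'emaud's book and used as a black box in the Appendix, so there is no in-paper argument to compare against. What you have written is essentially the standard regenerative proof of the Markov-chain ergodic theorem (and indeed the one in the cited reference): fix a state $x_0$ with $\pi(x_0)>0$, cut the trajectory at successive returns to $x_0$, observe that the cycle lengths $\tau_k$ and cycle sums $U_k$ are i.i.d.\ with means $1/\pi(x_0)$ and $\mathbb{E}_\pi[g]/\pi(x_0)$, and apply the strong law. The skeleton is correct, and since the conclusion is only convergence in probability, the a.s.\ convergence you obtain is more than enough. (One caveat: this argument presupposes a countable state space, where ``the set of times at which $Y_n=x_0$'' is meaningful; that matches Br\'emaud's setting and every application in the paper, but for a general state space one would need Nummelin splitting to manufacture the regeneration times.)

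The one genuinely broken step is your treatment of unbounded $g$ at the end. You propose to control $\frac{1}{M}\sum_{i\le M}|g(Y_i)-g_K(Y_i)|$ by applying the ergodic theorem to the bounded function $|g-g_K|\wedge K'$; but that only bounds the truncated sum from \emph{below} by nothing useful --- $\frac{1}{M}\sum(|g-g_K|\wedge K')(Y_i)\le\frac{1}{M}\sum|g-g_K|(Y_i)$, so the inequality points the wrong way and the ``diagonal'' argument does not close. The good news is that no truncation is needed: having reduced to $g\ge 0$, you already have the sandwich
\[
\frac{1}{M}\sum_{j=1}^{N_M}U_j\;\le\;\frac{1}{M}\sum_{i=T_0+1}^{M}g(Y_i)\;\le\;\frac{1}{M}\sum_{j=1}^{N_M+1}U_j,
\]
and both extremes converge a.s.\ to $\pi(x_0)\cdot\mathbb{E}_\pi[g]/\pi(x_0)=\mathbb{E}_\pi[g]$ because the SLLN is applied to the i.i.d.\ cycle sums $U_j$, which require only $\mathbb{E}[U_1]<\infty$ (equivalently $\mathbb{E}_\pi[|g|]<\infty$), never boundedness of $g$. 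The last partial excursion is absorbed into the upper bound, and $U_{N_M+1}/M\to 0$ follows from $\max_{k\le n}U_k/n\to 0$ for i.i.d.\ integrable variables. Replace the truncation paragraph with this sandwich and the proof is complete.
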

        \begin{proof}[Proof of Proposition~\ref{prop:serial consistent}]
            Let $m^* \in \{0,\ldots,M\}$, $(Y_i)_{i=0}^M$ and $(\widetilde{X}_i)_{i=0}^M$ be as in the permuted serial method. Note that,
            \begin{align*}
                p_{MC} =&\frac{1}{M+1}\left(\sum_{i=1}^MI[T(\widetilde{X}_i) \ge T(X_0)]+1\right)\\
                =&\frac{1}{M+1}\sum_{i=0}^{m^*-1}I[T(Y_i) \ge T(X_0)] + \frac{1}{M+1} \\
                &+ \frac{1}{M+1}\sum_{i=m^*+1}^M I[T(Y_i) \ge T(X_0)]
            \end{align*}
            Since $\frac{1}{M+1}\to 0$ and $\frac{M}{M+1}\to 1$ as $M\to \infty$, it suffices to show the consistency of
            \begin{align}\label{eq:new_P}
                p_{MC}'=&\frac{1}{M}\sum_{i=0}^{m^*-1}I[T(Y_i) \ge T(X_0)]\\
                &+\frac{1}{M}\sum_{i=m^*+1}^{M}I[T(Y_i) \ge T(X_0)].\nonumber
            \end{align}
            We have assumed that $K^L$ is irreducible and that $\pi$ is a stationary distribution for $K$. This implies that $K^L$ and $\hat{K}^L$ are both irreducible and have stationary distribution $\pi$. Theorem~\ref{theorem:positive} thus implies that $K^L$ and $\widehat{K}^L$ are both irreducible positive recurrent. We can therefore apply Theorem~\ref{theorem:ergodic} with $g(y) = I[T(y) \ge T(X_0)]$ to both $K^L$ and $\widehat{K}^L$. If $Y_0=X_0,Y_1,\ldots,Y_m$ is a Markov chain with transition kernel $K^L$ or $\widehat{K}^L$, then, as $m \to \infty$, 
            \begin{align*}\frac{1}{m}\sum_{i=1}^m I[T(Y_i) \ge T(X_0)] &\stackrel{\mathbb{P}}{\to} \pi(\{y : T(y) \ge T(X_0)\})\\
                &  = p_A. 
            \end{align*}
            In \eqref{eq:new_P}, $Y_{m^*-1},\ldots,Y_0$ are generated by running a Markov chain according to $\widehat{K}^L$ from $X_0$ and likewise $Y_{m^*+1},\ldots,Y_M$ are generated by running a Markov chain according to $K^L$ from $X_0$. Suppose now that we fix $\varepsilon > 0$. As $M$ goes to infinity, $m^*$ and $M-m^*$ both go to infinity with probability one. Thus, as $M \to \infty$, we will have that
            \begin{align*}
                p_1&=\frac{1}{m^*}\sum_{i=1}^{m^*-1}I[T(Y_i) \ge T(X_0)], \text{ and }\\
                p_2&=\frac{1}{M-m^*}\sum_{i=m^*+1}^M I[T(Y_i) \ge T(X_0)],
            \end{align*}
            are within $\varepsilon$ of $p_A$ with probability tending to one. Since $p_{MC}'$ is convex combination of $p_1$ and $p_2$, we can conclude that $p_{MC}'$ will also be within $\varepsilon$ of $p_A$ with probability tending  to one.
        \end{proof}
    \end{appendix}

\end{document}